\definecolor{bluecite}{HTML}{0875b7}
\newcounter{definition}
 \newtheorem{Definition}[definition]{Definition}
  \newcounter{remark}
 \newtheorem{Remark}[remark]{Remark}
  \newcounter{example}
 \newtheorem{Example}[example]{Example}
 \newcounter{theorem}
 \newtheorem{Theorem}[theorem]{Theorem}
 \newcounter{lemma}
 \newtheorem{Lemma}[lemma]{Lemma}
 \newcounter{corollary}
 \newtheorem{Corollary}[corollary]{Corollary}
 \newcounter{proposition}
 \newtheorem{Proposition}[proposition]{Proposition}
\title{Enlargement of symmetry groups in physics:\\
a practitioner's guide

}
\author{
  Lehel Csillag \thanks{Author to whom any correspondence should be addressed}\\
Department of Mathematics and Computer Science, Transilvania University\\
Maniu Street 50, Brașov 500091, Romania and\\
Department of Physics,
Babeș-Bolyai University  \\
Kogălniceanu Street, Cluj Napoca 400084, Romania\\
\texttt{lehel.csillag@unitbv.ro, lehel.csillag@ubbcluj.ro, lehel@csillag.ro}
\And
Julio M. Hoff da Silva \\
 Departamento de F\'isica,
Universidade
Estadual Paulista, UNESP \\
Av. Dr. Ariberto Pereira da Cunha, 333, Guaratinguet\'a, SP,
Brazil\\
\texttt{julio.hoff@unesp.br} \\
  \And
Tudor Pătuleanu \\
Department of Physics, West University of Timișoara \\
Bd. Vasile Pârvan 4, Timișoara 300223, Romania \\
\texttt{tudor.patuleanu@e-uvt.ro} \\
}
\begin{document}
\maketitle

\begin{abstract}Wigner's classification has led to the insight that projective unitary representations play a prominent role in quantum mechanics. The physics literature often states that the theory of projective unitary representations can be reduced to the theory of ordinary unitary representations by enlarging the group of physical symmetries. Nevertheless, the enlargement process is not always described explicitly: it is unclear in which cases the enlargement has to be done to the universal cover, a central extension, or to a central extension of the universal cover. On the other hand, in the mathematical literature, projective unitary representations were extensively studied, and famous theorems such as the theorems of Bargmann and Cassinelli have been achieved. The present article bridges the two: we provide a precise, step-by-step guide on describing projective unitary representations as unitary representations of the enlarged group. Particular focus is paid to the difference between algebraic and topological obstructions. To build the bridge mentioned above, we present a detailed review of the difference between group cohomology and Lie group cohomology. This culminates in classifying Lie group central extensions by smooth cocycles around the identity. Finally, the take-away message is a hands-on algorithm that takes the symmetry group of a given quantum theory as input and provides the enlarged group as output. This algorithm is applied to several cases of physical interest. We also briefly outline a generalization of Bargmann's theory to time-dependent phases using Hilbert bundles.
\end{abstract}


\tableofcontents

\section{Introduction}

From a modern perspective, a complete particle characterization is given once one constructs an irreducible representation of the Poincaré group. In this context, for instance, the very concept of spin arises as a quantum label without which the spacetime relativistic symmetries could not be entirely represented in the Hilbert space \cite{wig}. Although the construction presented in \cite{wig} provides an exhaustive classification of the particle content in Quantum Field Theory, it is worthwhile to mention that the concept of spin as intrinsic angular momentum dates back to Uhlenbeck and Goudsmit in $1925$.

A decade after the Stern-Gerlach experiment was conducted in $1922$, spin was introduced ad-hoc in non-relativistic quantum mechanics {by Pauli \cite{pauli1927}} and it was believed that the correct gyromagnetic ratio can only be explained using a relativistic theory \cite{levine2014}. While it is true that the interaction between spin and orbital angular momentum {can only be described in a relativistic theory, as Dirac has shown in \cite{Dirac1928}}, spin already appears in the non-relativistic setting. The key theoretical observation to which the concept of spin could be traced back appears in the works of Wigner \cite{wig} and Bargmann \cite{barg}, who pointed out that in quantum mechanics, instead of using unitary representations of a group $G$, one has to use unitary projective (or ray) representations. This is because any two quantum states differing only by an unimodular phase lead to the same probabilities, as the probabilities are the measurable quantities in quantum mechanics. Hence, (pure) quantum states are rays in a projective Hilbert space instead of elements of a Hilbert space. Symmetries of quantum systems have, therefore, to be implemented as bijections on the projective Hilbert space, which preserve transition probabilities. Wigner's theorem \cite{lomont1963,emch1963,bargmann1964,simon2008,gyory2004,sharma1990} shows that every such map can be implemented on the Hilbert space as a unitary or anti-unitary operator up to a phase. 

Phases appearing in projective representations are well known to depend on the group element being represented and not dependent, in general, on the physical state upon which the representation takes place. {However, it has been argued that time (or even spacetime) phase dependence may connect projective representations with precise experiments \cite{polo,cow,kchu,nesv}. These experiments have in common that the gravitational field acts on the phase of a wave function}. The relation between time or spacetime phase dependence and gauge freedom is widely understood. It is, therefore, only natural to generalize projective representation theory, {to allow for the inclusion of a  spacetime dependence in the phase}. This generalization was performed in \cite{voa}, and we describe its main aspects in the Appendix \ref{appendixA} section. As we shall see, it brings interesting aspects of generalization and, at the same time, fits the main line of presentation for this work since it falls within the algorithm we construct along the paper. 

Being motivated by foundational aspects of non-relativistic quantum mechanics, {Lévy-}Leblond \cite{levy1963,levy1967} has extensively studied the unitary ray representations of the Galilei group. However, it is important to note that he missed one representation, as pointed out recently by O'Farrill and his collaborators \cite{figueroaofarrill2024}.  In the seminal paper on non-relativistic wave-equations \cite{levy1967}, {Lévy-}Leblond also discovered a Galilei-invariant wave equation, through which the correct gyromagnetic ratio could be explained from first principles within a non-relativistic theory. Earlier attempts at formulating Galilei-invariant equations are also present in the literature \cite{galindo1961,eberlein1962}, although they are not completely satisfactory. The equation presented in \cite{galindo1961} is not invariant under the full Galilean group, only under one of its subgroups, which the authors call the static Galilei group. For a detailed review of the difference between the Pauli and the Lévy-Leblond equations, see \cite{wilkes2020}.

Outside the Galilean group, several other possible kinematic groups exist for non-relativistic theories, such as the Carroll or Bargmann groups. A complete classification can be found in \cite{bacry1968}. Although this classification is quite old, there is a recent growing interest in studying non-relativistic theories motivated by non-relativistic holography \cite{zaanen2015,christensen2014}, soft theorems \cite{weinberg1965} and Bondi–Metzner–Sachs (BMS) symmetries \cite{duval2014}, non-relativistic string theories \cite{gomis2001,cardona2016} and Carrollian physics \cite{levyleblond1965,sengupta1966}, which allows for understanding the symmetries of null hypersurfaces \cite{donnay2019}. For a detailed review of non-Lorentzian physics, consult \cite{bergshoeff2023}.

The underlying mathematical concept, which treats both the Carroll and Bargmann groups on the same footing, is that of a central extension \cite{figueroa2023}. Historically, central extensions from a physical perspective were first studied in quantum mechanics in the context of projective representations and geometric quantization \cite{tuynman1987}, but also modern treatments of symmetry actions in quantum mechanics are present in the literature \cite{cassinelli2004}. However, {the latter} might be quite abstract for a physicist. From a physicist's perspective, projective representations are briefly described in Weinberg's modern textbook {on quantum field theory} \cite{weinberg2005}. Therein, at the level of rigor of a physicist, it is pointed out that for any {symmetry group} (Lie group) $G$, one can enlarge {$G$ to $G_{enl}$}, {such that the projective unitary representations of $G$ are replaced with the unitary representations of $G_{enl}$.} Weinberg also introduces central charges, which essentially are generators of the centrally extended Lie algebra associated with $G_{enl}$.

Motivated by the recently growing interest in non-relativistic kinematical symmetry groups in both classical and quantum physics and the interaction between anomalies and central extensions \cite{freed2023}, the goal of the present review article is to give a concrete algorithm of finding $G_{enl}$ as present in Weinberg's textbook, using mathematical properties of the Lie group $G$ in question, based on previous results already presented in the literature, making quite explicit all the assumptions. In particular, we would like to be as precise as possible and build a bridge between the mathematical and physical literature on this topic, opening the possibility for future productive collaborations between mathematicians and physicists. Each mathematical theorem presented will be exemplified through direct applications in physics.

The paper is structured as follows. We start with an introduction and a short recapitulation of states and observables in quantum mechanics to place both mathematicians and physicists on the same foot and level of rigor. Symmetries á la Wigner are presented, then the central notions of (projective) unitary representations are defined, considering strict topological conditions. In section \ref{section2}, we turn our attention to lifting projective representations of a Lie group $G$ to ordinary representations of an enlarged group, as it turns out that projective representations can't always be lifted to ordinary representations of the same group. We classify the obstructions from an algebraic and a topological perspective. As projective representations are in one-to-one correspondence with $U(1)$ group extensions and cohomology theory, we give a detailed exposition on Lie group cohomology, which is only briefly presented in the literature from a completely different perspective \cite{chevalley1948} or for infinite-dimensional Banach Lie groups \cite{neeb2002}. In section \ref{section3}, we present the algorithm mentioned above, which determines the enlarged group $G_{enl}$ from the group $G$, given some algebraic and topological conditions. The algorithm is concretely applied to three physically interesting cases: the Galilei, Poincaré, and Translation groups. The latter is closely related to the Heisenberg group and the Stone-von-Neumann uniqueness theorem \cite{vonneumann1931,mackey1949,rosenberg}, which guarantees the equivalence between wave- and matrix mechanics. We conclude the review in section \ref{section4} by summarizing our results and providing an outlook for further research possibilities. To guarantee the paper's sequential readability, we leave for  Appendix \ref{appendixA} an exposition of the main lines upon which Bargmann's theory was generalized to encompass phase-time dependence in projective representations. 

\subsection{Mathematical framework of quantum mechanics}
In this subsection, we set up the precise mathematical framework that we will use to bring the physicist and mathematician readers to the same page.\footnote{{In order to guarantee sequential readability, we provide here a pinpointed list of mathematical symbols and their meaning: \\ $\mathcal{H}$: Hilbert space; \\ $\mathcal{D}_{A}$: Domain of the operator $A$ $(\mathcal{D}_{A}\subset \mathcal{H})$; \\ $\widehat{\mathcal{H}}$: Projective Hilbert space; \\$\operatorname{Aut}(\cdot)$: Automorphism group of $\cdot$;\\ $PU(\cdot)$: Projective unitary group of $\cdot$;\\ $\pi_1(\cdot)$: First homotopy group of $\cdot$;\\$H^2(\cdot,\circ)$: Second cohomology group of $\cdot$ with coefficients in $\circ$;\\ $\rtimes$: Semi-direct product.}}

{First, we remark that even though the wave function (or a vector in the Hilbert space) is usually considered a state, this definition is not entirely satisfactory. In experiments, we can only measure the probability densities, namely $|\psi|^2$. Therefore, a total global phase cannot be measured, and we must consider elements of the Projective Hilbert space. In other words, we have to describe the states of a quantum system by trace-class linear maps.}
\begin{Definition}
    To every quantum system, there is an associated complex, separable Hilbert space $(\mathcal{H},\langle \cdot, \cdot \rangle)$. Moreover:
    \begin{enumerate}
        \item[$(i)$] The \textbf{states} of the quantum system are all trace-class linear maps
    \begin{equation*}
        \rho:\mathcal{H} \to \mathcal{H}, \; \; \text{for which} \; \; \operatorname{tr}(\rho)=1.
    \end{equation*}
    Furthermore, a state $\rho$ is called \textbf{pure} if
    \begin{equation*}
        \exists \psi \in \mathcal{H}: \forall \Psi \in \mathcal{H}: \rho(\Psi)=\frac{\langle \psi, \Psi \rangle}{\langle \psi,\psi \rangle} \psi.
    \end{equation*}
    If a state is not pure, it is called \textbf{mixed}.
    \item[$(ii)$] The \textbf{observables} of the quantum system are all densely defined self-adjoint operators
    \begin{equation*}
        A: \mathcal{D}_{A} \to \mathcal{H}.
    \end{equation*}
    \end{enumerate}
\end{Definition}
Since in Quantum Mechanics we can only measure transition probabilities, which are not affected by an overall \textit{global} phase in the wavefunction, states are rays in the projective space $\widehat{\mathcal{H}}:= \left(\mathcal{H} \setminus \{ 0 \}\right) / \sim$\footnote{The equivalence relation $x \sim y \iff \exists \theta \in \mathbb{R} : x = e^{i \theta} y$ is used in defining the projective Hilbert space. {Hence, the projective Hilbert space is defined as the Hilbert space set minus the zero element, quotient out by the equivalence relation of being related by a phase.}}.

\begin{Definition}
The function 
$$
\begin{array}{llll}
( \cdot, \cdot): & \widehat{\mathcal{H}} \times \widehat{\mathcal{H}} &\to &\mathbb{R},\\
& \left(\widehat{\varphi}, \widehat{\psi} \right) &\mapsto &\left(\widehat{\varphi},\widehat{\psi}\right) = \frac{|\langle \varphi, \psi \rangle|^2}{|| \varphi ||^2 ||\psi||^2}
\end{array}
$$
is called the \textbf{transition probability}.
\end{Definition}

\begin{Proposition}
    Let $(\mathcal{H},\langle \cdot, \cdot \rangle)$ be a complex Hilbert space and let $T$ denote the set of all trace-class, positive linear operators with trace equal to one that satisfy
    \begin{equation*}
        \exists \psi \in \mathcal{H}: \forall \alpha \in \mathcal{H}: \rho(\alpha)=\frac{\langle \psi, \alpha \rangle}{\langle \psi,\psi \rangle} \psi.
    \end{equation*}
    In other words, $T$ denotes the set of pure states:
    \begin{equation*}
        T=\{\rho:\mathcal{H} \to \mathcal{H}| \rho \; \; \text{is a pure state} \}.
    \end{equation*}
    Then, there exists a bijection between $T$ and $\widehat{\mathcal{H}}$.
\end{Proposition}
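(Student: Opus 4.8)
The plan is to exhibit an explicit bijection $\Phi \colon \widehat{\mathcal{H}} \to T$ and verify its two directions separately. I would define $\Phi$ by sending a ray $\widehat{\psi}$ to the operator
\[
\rho_\psi(\alpha) = \frac{\langle \psi, \alpha\rangle}{\langle \psi, \psi\rangle}\,\psi,
\]
which is precisely the form appearing in the definition of a pure state. The first step is well-definedness: replacing $\psi$ by $e^{i\theta}\psi$ (indeed by any nonzero rescaling $\lambda\psi$) leaves the quotient $\langle \lambda\psi,\alpha\rangle\,\lambda\psi / \langle\lambda\psi,\lambda\psi\rangle$ unchanged, since the factors of $\lambda$ and $\bar{\lambda}$ cancel, so $\rho_\psi$ depends only on the ray and $\Phi$ is well defined. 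I would then check that $\rho_\psi \in T$: it is rank one, hence trace-class; it is positive because $\langle \alpha, \rho_\psi \alpha\rangle = |\langle\psi,\alpha\rangle|^2/\|\psi\|^2 \ge 0$; and computing the trace in an orthonormal basis whose first element is $\psi/\|\psi\|$ gives $\operatorname{tr}(\rho_\psi) = 1$.

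Surjectivity is then almost immediate and follows directly from the definition of $T$: every $\rho \in T$ is by hypothesis a pure state, so there exists $\psi$ with $\rho = \rho_\psi$, whence $\rho = \Phi(\widehat\psi)$ lies in the image.

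The main work — and the step I expect to be the real obstacle — is injectivity, which amounts to recovering the ray from the operator. Given $\rho_\psi = \rho_\varphi$, I would observe that the range of $\rho_\psi$ is the one-dimensional subspace $\operatorname{span}(\psi)$ (it is the image, and $\psi$ itself is fixed by $\rho_\psi$), so equality of the operators forces $\varphi = c\,\psi$ for some $c \in \mathbb{C}\setminus\{0\}$. The delicate point is that the defining formula is invariant under \emph{all} nonzero rescalings, whereas the equivalence relation defining $\widehat{\mathcal{H}}$ identifies only vectors differing by a phase $e^{i\theta}$. To obtain a genuine bijection one must therefore reconcile these: either take the representatives in each ray to be unit vectors before quotienting (so that $c$ is automatically a phase), or equivalently read $\widehat{\mathcal{H}}$ through its scale-invariant transition-probability structure, under which $\rho_\psi$ determines and is determined by the line $\mathbb{C}\psi$. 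With this identification fixed, $c$ reduces to a phase, $\widehat\psi = \widehat\varphi$, and $\Phi$ is the desired bijection.
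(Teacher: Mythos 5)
Your proof is correct and follows essentially the same route as the paper: the same explicit assignment $\widehat{\psi} \mapsto \rho_{\psi}$ with $\rho_{\psi}(\alpha)=\langle \psi,\alpha\rangle\,\psi/\langle\psi,\psi\rangle$, and the same sequence of checks (well-definedness under rescaling, trace one, positivity, then injectivity via the one-dimensional range and surjectivity directly from the definition of $T$). Your closing remark about reconciling invariance under all nonzero scalars with the phase-only equivalence relation defining $\widehat{\mathcal{H}}$ is a genuine subtlety that the paper's own injectivity step glosses over (it passes from linear dependence of $\{\psi,\phi\}$ straight to $\widehat{\psi}=\widehat{\phi}$), so your extra care there is warranted rather than a departure.
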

\begin{proof}
    Let $\widehat{\psi} \in \widehat{\mathcal{H}}$ be fixed and define the operator associated to $\widehat{\psi}$ as follows
    \begin{equation*}
        \rho_{\widehat{\psi}}: \mathcal{H} \to \mathcal{H}, \; \; \alpha \mapsto \rho_{\widehat{\psi}}(\alpha):=\frac{\langle \psi, \alpha \rangle}{\langle \psi, \psi \rangle} \psi.
    \end{equation*}
    The thus defined operator is well-defined, since if $\widehat{\psi}=\widehat{\phi}$, then $\psi= \mu \cdot \phi$ and we obtain
    \begin{equation*}
        \rho_{\widehat{\psi}}(\alpha)=\frac{ \langle \psi, \alpha \rangle}{\langle \psi, \psi \rangle} \psi=\frac{ \langle \mu \cdot \phi, \alpha \rangle}{\langle \mu \cdot \phi, \mu \cdot \phi \rangle} \mu \cdot \phi=\frac{ \mu^{*} \langle \phi,\alpha \rangle}{\mu^{*} \mu \langle \phi, \phi \rangle} \mu \cdot \phi=\frac{ \langle \phi, \alpha \rangle}{\langle \phi, \phi \rangle} \phi=\rho_{\widehat{\phi}}(\alpha).
    \end{equation*}
    Moreover, the trace of the operator $\rho_{\widehat{\psi}}$ can be computed as
    \begin{equation*}
        \text{Tr}(\rho_{\widehat{\psi}})=\sum_{j=1}^{\text{dim}(\mathcal{H})} \langle e_j, \rho_{\widehat{\psi}}(e_j) \rangle=\sum_{j=1}^{\text{dim}(\mathcal{H})} \frac{ \langle \psi, e_j \rangle}{\langle \psi, \psi \rangle} \langle e_j, \psi \rangle=\sum_{j=1}^{\text{dim}(\mathcal{H})} \frac{|\langle e_j, \psi \rangle|^2}{\langle \psi, \psi \rangle}=\frac{\langle \psi, \psi \rangle}{\langle \psi, \psi \rangle}=1.
    \end{equation*}
    In the last step, we used the Parseval identity. The operator $\rho_{\widehat{\psi}}$ is also positive, as can be seen from
    \begin{equation*}
        \langle \alpha, \rho_{\widehat{\psi}}(\alpha)\rangle= \frac{ \langle \psi, \alpha \rangle}{ \langle \psi, \psi \rangle} \langle \alpha, \psi \rangle=\frac{| \langle \psi, \alpha \rangle|^2}{\langle \psi, \psi \rangle} \geq 0.
    \end{equation*}
    The above discussion provides a well-defined map
    \begin{equation*}
        A: \widehat{\mathcal{H}} \to T, \; \; \widehat{\psi} \mapsto A\left(\widehat{\psi}\right):=\rho_{\widehat{\psi}}.
    \end{equation*}
    We will now prove that $A$ is a bijection in two steps:
    \begin{enumerate}
        \item $A$ is injective: let $\phi, \psi \in \mathcal{H}$ such that $A\left( \widehat{\phi} \right)=A\left(\widehat{\psi}\right)$. As we have shown that $A$ is well-defined, this implies:
        \begin{equation*}
            \forall \alpha \in \mathcal{H}: \rho_{\widehat{\psi}}(\alpha)=\rho_{\widehat{\phi}}(\alpha).
        \end{equation*}
        Since the above equality holds for all $\alpha$, it holds in particular for $\phi$. By explicitly evaluating, we get
        \begin{equation*}
            \frac{\langle \psi, \phi \rangle}{\langle \psi, \psi \rangle} \psi = \frac{\langle \phi, \phi \rangle}{\langle \phi, \phi \rangle} \phi=\phi.
        \end{equation*}
        Letting $\mu=\frac{\langle \psi, \phi \rangle}{\langle \psi, \psi \rangle}$, it follows that $\{\psi,\phi\}$ is a linearly dependent set and this implies that $\widehat{\psi}=\widehat{\phi}$.
        \item $A$ is surjective: let $\rho \in T$ be given. By the definition of $T$, we have
        \begin{equation*}
            \exists \psi \in \mathcal{H}, \forall \alpha \in \mathcal{H}: \rho(\alpha)=\frac{\langle \psi, \alpha \rangle}{\langle \psi ,\psi \rangle} \psi.
        \end{equation*}
        Now we must find $\widehat{\phi} \in \widehat{\mathcal{H}}$, such that $A\left( \widehat{\phi} \right)=\rho$. We claim that $\phi:=\frac{\psi}{\langle \psi, \psi \rangle}$ satisfies this condition. Indeed:
        \begin{equation*}
        \begin{aligned}
            \left(A\left(\widehat{\phi}\right)\right)(\alpha)&= \left(A\left( \widehat{\frac{\psi}{\langle \psi, \psi \rangle}} \right) \right)( \alpha)=\rho_{ \widehat{\frac{\psi}{\langle \psi, \psi \rangle}}} (\alpha)\\
            &=\frac{ \langle \frac{\psi}{\langle \psi, \psi \rangle}, \alpha \rangle}{\langle \frac{ \psi}{ \langle \psi, \psi \rangle}, \frac{\psi}{\langle \psi, \psi \rangle} \rangle}\frac{\psi}{\langle \psi, \psi \rangle}
            =\frac{\frac{1}{\langle \psi, \psi \rangle} \langle \psi, \alpha \rangle}{ \frac{1}{\langle \psi, \psi \rangle} \frac{1}{\langle \psi, \psi \rangle} \langle \psi, \psi \rangle} \frac{\psi}{\langle \psi, \psi \rangle} \\
            &=\frac{ \frac{1}{\langle \psi, \psi \rangle} \langle \psi, \alpha \rangle }{\frac{1}{\langle \psi, \psi \rangle}} \frac{\psi}{\langle \psi, \psi \rangle}=\frac{\langle \psi, \alpha \rangle }{\langle \psi, \psi \rangle} \psi = \rho(\alpha).
            \end{aligned}
        \end{equation*}
    \end{enumerate}
\end{proof}
\begin{Remark}
     In the physics literature, the states are often associated with normalized elements of the Hilbert space. The following proposition gives the precise relation between states and normalized elements of $\mathcal{H}$.
\end{Remark}
\begin{Proposition}
    Let $(\mathcal{H},\langle \cdot, \cdot \rangle)$ be a complex separable Hilbert space and denote with $P'(\mathcal{H})$ the set of normalized elements of $\mathcal{H}$, which are identified through $x \sim y \iff \exists \lambda \in \mathbb{C}^{*}: x= \lambda y$, i.e.
    \begin{equation*}
        P'(\mathcal{H}):=\{\psi \in \mathcal{H}| \langle \psi, \psi \rangle=1 \}/ \sim.
    \end{equation*}
    Then, there exists a bijection
    \begin{equation*}
        B:P'(\mathcal{H}) \to \widehat{\mathcal{H}}.
    \end{equation*}
\end{Proposition}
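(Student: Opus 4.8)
The plan is to obtain $B$ as the map induced on quotients by the inclusion of the unit sphere $\{\psi \in \mathcal{H} : \langle \psi,\psi\rangle = 1\}$ into $\mathcal{H}\setminus\{0\}$, and to produce its inverse by normalization. Concretely, for a unit vector $\psi$ I would set $B([\psi]) := \widehat{\psi}$, where $[\psi]$ denotes the class in $P'(\mathcal{H})$ and $\widehat{\psi}$ the class in $\widehat{\mathcal{H}}$, and in the reverse direction I would send $\widehat{\chi}$ to $[\chi/\|\chi\|]$. The proof then amounts to checking that both assignments are well defined and mutually inverse, so that no separate injectivity/surjectivity arguments are strictly needed once the inverse is in hand.

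First I would verify that $B$ is well defined. If $\psi,\phi$ are unit vectors with $[\psi]=[\phi]$, then by the defining relation on $P'(\mathcal{H})$ there is $\lambda \in \mathbb{C}^*$ with $\psi = \lambda\phi$, so $\psi$ and $\phi$ differ by a nonzero scalar and hence $\widehat{\psi}=\widehat{\phi}$; thus $B$ is independent of the chosen representative. Injectivity is the reverse reading of the same line: from $\widehat{\psi}=\widehat{\phi}$ we obtain $\psi=\lambda\phi$, and taking norms forces $|\lambda|=1$, so already $[\psi]=[\phi]$ in $P'(\mathcal{H})$. This step is where normalization does the essential work, since it is precisely what promotes an a priori arbitrary scalar to a phase and makes the two identifications compatible.

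For the inverse I would take $\widehat{\chi}\in\widehat{\mathcal{H}}$ with $\chi\neq 0$ and note that $\chi/\|\chi\|$ is a unit vector, so $[\chi/\|\chi\|]$ lies in $P'(\mathcal{H})$; well-definedness of this assignment follows because two representatives of $\widehat{\chi}$ differ by a scalar whose modulus is washed out by normalization. Composing the two maps yields the identity on each side, using on the one hand that $\psi/\|\psi\| = \psi$ for unit $\psi$, and on the other that $\chi$ and $\chi/\|\chi\|$ represent the same element of $\widehat{\mathcal{H}}$. The main obstacle, and really the only nontrivial point, is bookkeeping the two equivalence relations carefully: $P'(\mathcal{H})$ is defined modulo $\mathbb{C}^*$ while $\widehat{\mathcal{H}}$ is a quotient of all nonzero vectors, and the whole argument hinges on the observation that these identifications agree on the unit sphere while still permitting a vector to be rescaled to unit norm without leaving its class in $\widehat{\mathcal{H}}$ — the same normalization device already exploited in the preceding proposition when passing to $\psi/\langle\psi,\psi\rangle$.
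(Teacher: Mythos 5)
Your proposal is correct and follows essentially the same route as the paper: define $B$ on classes of unit vectors by $[\psi]\mapsto\widehat{\psi}$, check well-definedness, and recover the inverse by normalization (the paper phrases the last step as separate injectivity and surjectivity checks rather than exhibiting a two-sided inverse, but the content is identical). Your normalization $\chi/\|\chi\|$ is in fact the cleaner choice, since the paper's $x/\langle x,x\rangle$ is not literally a unit vector, though it represents the same ray.
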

\begin{proof}
    The proof is constructive. We construct the map $B$ as
    \begin{equation*}
        B: P'(\mathcal{H}) \to \widehat{\mathcal{H}}, \; \; [x]_1 \mapsto  B([x]_1):=\hat{x},
        \end{equation*}
        where $[x]_1$ is the equivalence class of a unit vector $x$ in $P'(\mathcal{H})$, while $\hat{x}$ is its equivalence class in $\widehat{\mathcal{H}}$. The map $B$ is well-defined: suppose $[x_1]_1=[x_2]_1$. This means, that there exists $\lambda \in \mathcal{C}^{*}, |\lambda|=1$, such that $x_1=\lambda \cdot x_2$. In this case, we have:
        \begin{equation*}
            B([x_1]_1)=\widehat{x_1}=\widehat{\lambda \cdot x_2}=\widehat{x_2}=B([x_2]_1).
        \end{equation*}
        We now show that $B$ is a bijection in two steps:
        \begin{enumerate}
            \item $B$ is injective: suppose $x_1, x_2$ are unit vectors, i.e. elements of $P'(\mathcal{H})$ such that $B([x_1]_1)=B([x_2]_1)$. By the definition of $B$ this gives
            \begin{equation*}
                \widehat{x_1}=\widehat{x_2}.
            \end{equation*}Then, by the definition of equivalence in $\widehat{\mathcal{H}}$, there exists a nonzero complex number $\lambda \in \mathbb{C}^{*}$ such that $x_1=\lambda x_2$. As $x_1,x_2$ are unit vectors, this can be satisfied if and only if $|\lambda|=1$, which implies that $x_1,x_2$ satisfy the equivalence relation on $P'(\mathcal{H})$ and thus $[x_1]_1=[x_2]_1$.
        \item $B$ is surjective: Let $\widehat{x} \in \widehat{\mathcal{H}}$. We want to find a unit vector $y$, such that $B([y]_1)=\widehat{x}$. Note that the zero vector is not an element of $\widehat{\mathcal{H}}$ by the very definition of the projective Hilbert space. Finally, define $y:=\frac{x}{\langle x, x \rangle}$. We claim that this does the job. Indeed, it can be readily verified that 
        \begin{equation*}
            B([y]_1)=\widehat{\frac{x}{\langle x,x \rangle}}=\widehat{x}.
        \end{equation*}
        \end{enumerate}
\end{proof}
From now on, we will consider states as rays, i.e., as elements in $\widehat{\mathcal{H}}$. A transformation of a physical system is a transformation of states, hence mathematically a transformation, not of $\mathcal{H}$, but of $\widehat{\mathcal{H}}.$ Not all transformations are symmetries, but only those that preserve the transition probabilities. This leads to the following natural definition due to Wigner \footnote{{Notice that, as defined in Definition 3, the transformations physically represent active transformations since they relate to different frames of reference.}}.
\begin{Definition}
    A bijective map $S:\widehat{\mathcal{H}} \to \widehat{\mathcal{H}}$ with the property
    \begin{equation*}
        \frac{| \langle  S \psi, S \phi \rangle|^2}{||S \psi||^2 ||S \phi||^2}=\frac{|\langle \psi, \phi \rangle|^2 }{||\psi||^2 ||\phi||^2}
    \end{equation*}
    is called a \textbf{symmetry} or \textbf{projective transformation} of the quantum mechanical system $(\mathcal{H}, \langle \cdot,\cdot \rangle)$. The \textbf{automorphism group} $\operatorname{Aut}\left( \widehat{\mathcal{H}} \right)$ of projective transformations is defined as
    \begin{equation*}
        \operatorname{Aut}\left(\widehat{\mathcal{H}} \right):=\{S:\widehat{\mathcal{H}} \to \widehat{\mathcal{H}}|S \; \; \text{is a symmetry} \}.
    \end{equation*}
\end{Definition}
We now present the theorem of Wigner, which relates symmetries to unitary and anti-unitary operators on $\mathcal{H}$.
\begin{Theorem}[Wigner]
Let the set of unitary or anti-unitary operators on a Hilbert space $\mathcal{H}$ be denoted as
\begin{equation*}
    \widetilde{U}(\mathcal{H}):=\{U: H \to \mathcal{H}|U \; \; \text{is unitary or anti-unitary} \}.
\end{equation*}
    Then the map
    \begin{equation*}
        \widehat{\Pi}:\widetilde{U} \to \operatorname{Aut}\left(\widehat{\mathcal{H}}\right), \; \; (\widehat{\Pi}(U))\left(\widehat{\psi} \right):=\widehat{U(\psi)}, \; \; \psi \in \mathcal{H}
\end{equation*}
is a surjective homomorphism with $\ker \left( \widehat{\Pi} \right)=U(1)$.
\end{Theorem}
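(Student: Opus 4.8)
\section*{Proof proposal}

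The plan is to verify the three assertions separately, noting at the outset that well-definedness, the homomorphism property, and the kernel computation are all elementary, whereas surjectivity is the genuine content of Wigner's theorem and the only real obstacle. Throughout I would use that $\widetilde{U}(\mathcal{H})$ is itself a group under composition (a composite of two anti-unitaries is unitary, a unitary with an anti-unitary is anti-unitary), so that the statement ``$\widehat{\Pi}$ is a homomorphism'' is meaningful.

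First I would check that $\widehat{\Pi}$ is well-defined and lands in $\operatorname{Aut}(\widehat{\mathcal{H}})$. If $U \in \widetilde{U}(\mathcal{H})$, then $U$ sends a ray to a ray, since $U(\lambda\psi)=\lambda U\psi$ in the unitary case and $U(\lambda\psi)=\overline{\lambda}\,U\psi$ in the anti-unitary case, so in both situations $\widehat{U(\lambda\psi)}=\widehat{U\psi}$. Moreover both types of operator satisfy $|\langle U\psi, U\phi\rangle| = |\langle \psi,\phi\rangle|$, the anti-unitary case producing a complex conjugate that disappears under the modulus; hence $\widehat{\Pi}(U)$ preserves transition probabilities and is a symmetry. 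The homomorphism property is then immediate, since $(\widehat{\Pi}(U)\circ\widehat{\Pi}(V))(\widehat{\psi}) = \widehat{\Pi}(U)(\widehat{V\psi}) = \widehat{U(V\psi)} = \widehat{\Pi}(UV)(\widehat{\psi})$, where we used well-definedness to represent $\widehat{\Pi}(V)(\widehat{\psi})$ by $V\psi$.

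Next, for the kernel, suppose $\widehat{\Pi}(U)=\operatorname{id}_{\widehat{\mathcal{H}}}$, so that $U\psi = \lambda_{\psi}\psi$ with $|\lambda_{\psi}|=1$ for every $\psi$. Comparing $U(\psi+\phi)=\lambda_{\psi+\phi}(\psi+\phi)$ with $U\psi + U\phi = \lambda_{\psi}\psi + \lambda_{\phi}\phi$ for linearly independent $\psi,\phi$ (additivity holds whether $U$ is linear or anti-linear) forces $\lambda_{\psi}=\lambda_{\phi}$, so $\lambda_{\psi}=:\lambda$ is a single global phase and $U=\lambda\cdot\operatorname{id}$. An anti-unitary $U$ of this form is then impossible: $U(i\psi)=i\lambda\psi$ by the assumed action, while $U(i\psi)=-i\lambda\psi$ by anti-linearity, forcing $\lambda=0$. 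Hence $\ker(\widehat{\Pi}) = \{\lambda\operatorname{id}:|\lambda|=1\}\cong U(1)$.

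The main work is surjectivity. Given a symmetry $S$, I would fix an orthonormal basis $(e_n)$ and choose unit representatives $f_n$ of the image rays $S(\widehat{e_n})$; preservation of transition probabilities makes $(f_n)$ orthonormal, and bijectivity of $S$ makes it complete. The decisive step is to exploit the residual phase freedom in the $f_n$: testing the image of each $\tfrac{1}{\sqrt{2}}(e_1+e_n)$ against the $f_m$ pins the relative phases so that $\tfrac{1}{\sqrt{2}}(e_1+e_n)$ maps to the ray of $\tfrac{1}{\sqrt{2}}(f_1+f_n)$. For a general $\psi=\sum_n c_n e_n$ with image representative $\chi=\sum_n d_n f_n$, the transition-probability constraints against $e_n$ and against $\tfrac{1}{\sqrt{2}}(e_1+e_n)$ yield $|d_n|=|c_n|$ and $\operatorname{Re}(\overline{d_1}d_n)=\operatorname{Re}(\overline{c_1}c_n)$, forcing $d_n/d_1 = c_n/c_1$ or its complex conjugate for each $n$. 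The hard part, and the place where the unitary/anti-unitary dichotomy is born, is to prove that this sign choice is the same for all indices $n$ and, crucially, the same for all vectors $\psi$; I would settle this by testing against mixed vectors such as $e_1+e_m+e_n$ to exclude index-dependent conjugation, together with a rigidity argument preventing distinct $\psi$ from falling into different cases. Once this global alternative is fixed, one sets $Ue_n:=f_n$ and extends linearly (unitary case) or anti-linearly (anti-unitary case), verifies isometry via Parseval, and confirms $\widehat{\Pi}(U)=S$. Since this classical argument is lengthy but standard, I would only sketch it and refer the reader to the detailed treatments \cite{bargmann1964,simon2008}.
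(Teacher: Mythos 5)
The paper does not actually prove this theorem: it states it and defers to the cited literature (\cite{lomont1963,emch1963,bargmann1964,simon2008,gyory2004,sharma1990}), so there is no in-paper argument to compare against. Judged on its own terms, your proposal identifies the right decomposition, and the three elementary parts are handled correctly: the well-definedness and homomorphism checks are complete, and the kernel computation (forcing a global phase from additivity on linearly independent vectors, then excluding anti-unitaries via $U(i\psi)$) is sound. Two small caveats there: the argument that all $\lambda_\psi$ coincide, and the exclusion of anti-unitaries from the kernel, both implicitly require $\dim\mathcal{H}\geq 2$ (for $\dim\mathcal{H}=1$ the statement as given is false, since every anti-unitary acts trivially on the one-point projective space), and when you compare $\lambda_\psi$ with $\lambda_{i\psi}$ you should route through a third, linearly independent vector since $\psi$ and $i\psi$ are dependent. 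For surjectivity, your sketch correctly isolates the genuinely hard step — that the choice between $d_n/d_1=c_n/c_1$ and its conjugate must be made uniformly in $n$ and uniformly across all vectors $\psi$, which is precisely where the unitary/anti-unitary dichotomy and the connectedness-type rigidity argument live — but you do not carry it out, and you also leave unaddressed the standard technicality of vectors with $c_1=0$. Deferring that step to \cite{bargmann1964,simon2008} is defensible and mirrors what the paper itself does, but you should be explicit that, as written, your surjectivity argument is an outline rather than a proof.
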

The abstract formulation of Wigner's theorem leads to two physically interesting corollaries:
\begin{enumerate}
    \item Every symmetry $S \in \text{Aut}\left(\widehat{\mathcal{H}}\right)$ comes from either a unitary or an anti-unitary operator on $\mathcal{H}$ - this follows from the surjectivity of $\widehat{\Pi}$;
    \item If two unitary transformations $U_1,U_2$ give rise to the same projective transformation $\widehat{\Pi}(U_1)=\widehat{\Pi}(U_2)$, then they differ by a phase, i.e. $U_1=e^{it}U_2$ for some $t \in \mathbb{R}$ - this follows from the kernel of $\widehat{\Pi}$ being $U(1)$.
\end{enumerate}
\begin{Remark}
    It is important to mention that several generalizations of Wigner's theorem are already present in the literature. For example, a non-bijective version is presented in \cite{geher2014}. Although not a generalization, the main line of the proof presented in \cite{mouchet2013} could be carried over also to non-separable Hilbert spaces. A generalization to Grassmann spaces can be found in \cite{geher2017}. For further generalizations and a detailed review, consult \cite{chevalier2007}.
\end{Remark}

Note that Wigner's theorem is a result for individual elements of $\operatorname{Aut}\left(\widehat{\mathcal{H}}\right)$. The remainder of this article is devoted to the study of not individual symmetries but symmetry groups. As we study connected Lie groups $G$, only considering symmetries arising from unitary operators will suffice \cite{silva2021}. They are a subgroup of all the symmetries $\text{Aut}\left(\widehat{\mathcal{H}} \right)$.
\begin{Definition}
    Let $(\mathcal{H},\langle \cdot, \cdot \rangle)$ be a complex separable Hilbert space, $\widehat{\mathcal{H}}$ be the projective Hilbert space and
    \begin{equation*}
        \widehat{\pi}:U(\mathcal{H}) \to \text{Aut}\left(\widehat{\mathcal{H}}\right).
    \end{equation*}
The group
    \begin{equation*}
        PU(\mathcal{H}):=\{\widehat{\pi}(U)|U \in {U}(\mathcal{H})\}
    \end{equation*}
    is called the \textbf{projective unitary group} of $\mathcal{H}$.
\end{Definition}
\begin{Remark}
    \label{eq: injective_proj_rep}
    Note that the map $\widehat{\pi}$ is exactly the map $\widehat{\Pi}|_{U(\mathcal{H})}$, that is, it is the map defined in Wigner's theorem restricted to the unitary operators. Also, by constructing equivalence classes of unitary operators, one can show that $PU (\mathcal{H}) = U (\mathcal{H}) / U(1)$.
\end{Remark}

As we want to consider symmetry groups $G$, not just individual symmetries, we introduce the notion of $G$-symmetry.
\begin{Definition}
    Let $(\mathcal{H},\langle \cdot, \cdot \rangle)$ be the Hilbert space associated with a quantum system and let $G$ be a connected Lie group. A $\mathbf{G}$\textbf{-symmetry of a quantum system} or a \textbf{projective unitary representation of} $\mathbf{G}$ is a continuous homomorphism
    \begin{equation*}
        \tau:G \to PU(\mathcal{H}),
    \end{equation*}
    where by continuity with mean with respect to the strong operator topology.
\end{Definition}
Similarly, we introduce the notion of unitary representation of a connected Lie group $G$.
\begin{Definition}
    A unitary representation of a connected Lie group $G$ on a Hilbert space $\mathcal{H}$ is a continuous group homomorphism
    \begin{equation*}
        \mathcal{R}:G \to U(\mathcal{H}),
    \end{equation*}
    where by continuity with mean with respect to the strong operator topology.
\end{Definition}
We shall end this section with a relevant observation concerning $U(\mathcal{H})$. It is stated, more often than never, in the literature that $U(\mathcal{H})$ is not a topological group with respect to the strong operator topology \cite{simms1968, atiyah2004}. However, a rigorous proof of this statement is not presented. By checking the continuity of the group operations explicitly, Schottenloher \cite{schottenloher} proves the contrary, namely that $U(\mathcal{H})$ is a topological group. For other topological aspects of $U(\mathcal{H})$, see \cite{simms1970}.

\section{The lifting problem and its cohomological ramifications}\label{section2}
This section explains how to lift projective unitary representations of a connected Lie group $G$ to unitary representations of $G$. First, we define a lift precisely in the context analyzed here.
\begin{Definition}
    Given a projective unitary representation $\tau:G \to PU(\mathcal{H})$ of a connected Lie group $G$, a unitary representation $\mathcal{R}:G \to U(\mathcal{H})$ is called a lift of $\tau$ if the following diagram commutes:
 \[\begin{tikzcd}
	G && {U(\mathcal{H})} \\
	\\
	&& {PU(\mathcal{H}),}
	\arrow["\mathcal{R}", from=1-1, to=1-3]
	\arrow["\tau"', from=1-1, to=3-3]
	\arrow["\widehat{\pi}", from=1-3, to=3-3]
\end{tikzcd}\]
which means that\footnote{Note that $\widehat{\pi}$ is trivially continuous since it defines the final topology on $PU (\mathcal{H})$.} $\tau= \widehat{\pi} \circ \mathcal{R}$.
\end{Definition}
\begin{Remark}
    The above diagram can be completed by the short exact sequence $\{e\} \to U(1) \to U(\mathcal{H}) \to PU(\mathcal{H}) \to \{e\}$ to obtain the following commutative diagram:
 \[\begin{tikzcd}
	&&& G \\
	{\{e\}} & {U(1)} & {U(\mathcal{H})} & {PU(\mathcal{H})} & {\{e\}.}
	\arrow[from=2-1, to=2-2]
	\arrow[from=2-2, to=2-3]
	\arrow["\widehat{\pi}"', from=2-3, to=2-4]
	\arrow[from=2-4, to=2-5]
	\arrow["\mathcal{R}"', from=1-4, to=2-3]
	\arrow["\tau", from=1-4, to=2-4]
\end{tikzcd}\]
\end{Remark}
In physics, we would like to reduce the theory of unitary projective representations to the theory of unitary representations, as in most cases we work with the Hilbert space $\mathcal{H}$ in practice, not with $\widehat{\mathcal{H}}$. Besides, projective representations may be challenging to handle. The problem with such a reduction is that such a lift does not generally exist for a connected Lie group $G$. There are two obstructions:
\begin{enumerate}
    \item[$(i)$] topological: when the group $G$ is not simply connected;
    \item[$(ii)$] algebraic: when the Lie algebra $\mathfrak{g}$ of $G$ admits central extensions.
\end{enumerate}
However, the projective unitary representations of $G$ are in one-to-one correspondence with unitary representations of an enlarged group $G_{enl}$.
What the enlarged group actually is depends on the obstruction:
\begin{enumerate}
    \item[$(i)$] if there is neither topological nor algebraic obstruction, i.e.
    \begin{equation*}
        \pi_1(G)=\{e\} \; \; \text{and} \; \; H^2(\mathfrak{g},\mathbb{R})=\{e\},
    \end{equation*}
    then the enlarged group is identical to the group itself
    \begin{equation*}
        G_{\text{enl}}=G,
    \end{equation*}
    which means that every projective unitary representation of $G$ can be lifted to a unitary representation of $G$;
    \item[$(ii)$] if there is topological obstruction but no algebraic obstruction, i.e.
    \begin{equation*}
        \pi_{1}(G) \neq \{e\} \; \; \text{and} \; \; H^2(\mathfrak{g},\mathbb{R})=\{e\},
    \end{equation*}then the enlarged group is the universal covering group of $G$
    \begin{equation*}
        G_{\text{enl}}=\widetilde{G};
    \end{equation*}
    \item[$(iii)$] if there is no topological obstruction, but there is algebraic obstruction, i.e.
    \begin{equation*}
        \pi_1(G) =\{e\} \; \; \text{and} \; \; H^2(\mathfrak{g},\mathbb{R}) \neq \{e\},
    \end{equation*}
    then the enlarged group is a central extension of the group $G$;
    \item[$(iv)$] if there is both topological and algebraic obstruction, i.e.
    \begin{equation*}
          \pi_1(G) \neq \{e\} \; \; \text{and} \; \; H^2(\mathfrak{g},\mathbb{R}) \neq \{e\},
    \end{equation*}
    then the enlarged group is a central extension of the universal covering group $\widetilde{G}$ of $G$.
\end{enumerate}
The enlarged group $G_{enl}$ appearing in cases $(ii)-(iv)$ is a central extension of $G$ in either case, but not in the sense of the algebraic theory of groups, but rather in Lie theoretic sense. The cases $(ii)$ and $(iii)$ are relatively simple to handle: $(ii)$ can be seen as a $\pi_1(G)$, whereas case $(iii)$ can be seen as a $U(1)$-extension of $G$. The case $(iv)$ is more subtle and will be discussed later.
\begin{Remark}
    Note that in the case of algebraic obstructions, we have a non-trivial Lie algebra cohomology, which is equivalent to a central extension of the Lie algebra $\mathfrak{g}$ of $G$ by $\mathbb{R}^{n}$ for some $n \in \mathbb{N}$. While it is clear that a Lie group extension of $G$ by $U(1)$ gives rise to a central extension of $\mathfrak{g}$ by $\mathbb{R}$, the converse in general is not true. This is why topological obstructions appear, as we will clarify this in the following.
\end{Remark}

To the best of our knowledge, the literature on Lie group cohomology and central extensions suffers from a somewhat laconism, which motivates a detailed discussion of Lie group extensions.
\begin{Definition}
    Let $G$ be a connected Lie group and $A$ be a connected abelian Lie group, a closed subgroup of $H$. A \textbf{central extension} of $G$ by $A$ is a short exact sequence of Lie groups
  \[\begin{tikzcd}
	{\{e\}} & A & H & G & {\{e\}}
	\arrow[from=1-1, to=1-2]
	\arrow["i", from=1-2, to=1-3]
	\arrow["p", from=1-3, to=1-4]
	\arrow[from=1-4, to=1-5],
\end{tikzcd}\] i.e. the map $i$ is an injective Lie group homomorphism and the map $p$ is a surjective Lie group homomorphism,
such that the image of $i$ is contained in the center of $H$:
\begin{equation*}
    \operatorname{im}(i) \subseteq Z(H), \; \; \text{where} \; \; Z(H)=\{z \in G| \forall g \in G: zg=gz\}.
\end{equation*}
Two central extensions $H_1$ and $H_2$ of $G$ by $A$ are called \textbf{equivalent} if there exists a Lie group isomorphism $\phi:H_1  \to H_2$ such that the following diagram commutes:
\[\begin{tikzcd}
	&& {H_1} \\
	{\{e\}} & A && G & {\{e\}} \\
	&& {H_2}
	\arrow[from=2-1, to=2-2]
	\arrow["{i_1}", from=2-2, to=1-3]
	\arrow["{i_2}"', from=2-2, to=3-3]
	\arrow["\phi", from=1-3, to=3-3]
	\arrow["{p_1}", from=1-3, to=2-4]
	\arrow["{p_2}"', from=3-3, to=2-4]
	\arrow[from=2-4, to=2-5]
\end{tikzcd}\]
\end{Definition}
Clearly, being equivalent as central extensions defines an equivalence relation on the set of all central extensions. We denote by $\operatorname{Ext}_{Lie}(G,A)$ the set of all inequivalent central extensions of $G$ by $A$. It is well known that if we forget the Lie group structure, the central extensions are classified by the second cohomology group:
\begin{equation*}
    Ext(G,A) \cong H^2_{gr}(G,A).
\end{equation*}

The natural extension of the above to Lie groups would be to consider only those cocycles in $H^2_{gr}(G,A)$, which are smooth \footnote{{By smooth, we mean that the cocycles are maps that are infinitely differentiable with respect to the smooth structures of the Lie group $G$ and the Lie group $A$.}}:
\begin{equation*}
    H^2_{s,gr}(G,A):=\{\omega: G \times G \to A| \omega \; \; \text{is a cocycle and it is smooth}\}/ \sim.
\end{equation*}

We will now show that this naive approach does not classify all Lie group extensions.
\begin{Lemma}\label{extensionbundle}
   Let $G,H$ be connected finite-dimensional Lie groups and $A$ be a connected finite-dimensional abelian Lie group. Then the following are equivalent:
   \begin{enumerate}
       \item [$(i)$]  \[\begin{tikzcd}
	{\{e\}} & A & H & G & {\{e\}}
	\arrow[from=1-1, to=1-2]
	\arrow["i", from=1-2, to=1-3]
	\arrow["p", from=1-3, to=1-4]
	\arrow[from=1-4, to=1-5]
\end{tikzcd}\] is a central extension of $G$ by $A$;
       \item [$(ii)$]  $H \to G$ is a principal $A$-bundle, such that $i(A) \subseteq Z(H)$.
   \end{enumerate}
\end{Lemma}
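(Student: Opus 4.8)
The plan is to prove the two implications separately, noting first that the centrality condition $i(A)\subseteq Z(H)$ appears verbatim in both $(i)$ and $(ii)$, so the genuine content is the equivalence between \emph{exactness of the sequence} and \emph{local triviality of the bundle}. Throughout I would keep $i\colon A\to H$ and $p\colon H\to G$ as the given Lie group homomorphisms, and use the smooth $A$-action on $H$ defined by $a\cdot h := i(a)\,h$. This action is smooth because $i$ and multiplication are smooth, and it is free exactly when $i$ is injective, since $i(a)h=h$ forces $i(a)=e$; its orbits are the cosets $i(A)h$.

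For $(i)\Rightarrow(ii)$: starting from a central extension, the kernel $K:=\ker p$ is a closed subgroup of $H$ (being $p^{-1}(\{e\})$), it equals $\operatorname{im}(i)$ by exactness, and it is central by hypothesis. Since $i$ is an injective Lie group homomorphism with closed image $K$, it is an isomorphism onto $K$, so $K\cong A$. First I would invoke the first isomorphism theorem for Lie groups: the surjective homomorphism $p$ induces a Lie group isomorphism $\bar p\colon H/K\xrightarrow{\sim}G$. Then the principal-bundle structure follows from the standard theorem that, for a closed subgroup $K\le H$, the quotient map $H\to H/K$ is a locally trivial principal $K$-bundle; transporting the structure group along $i\colon A\xrightarrow{\sim}K$ exhibits $p\colon H\to G$ as a principal $A$-bundle. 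Concretely, a local section $s\colon U\to H$ of $p$ over a neighborhood $U$ of a point yields the trivialization $U\times A\to p^{-1}(U)$, $(g,a)\mapsto i(a)\,s(g)$.

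For $(ii)\Rightarrow(i)$: assume $p$ is a principal $A$-bundle (with the $A$-action above) and $i(A)\subseteq Z(H)$. Surjectivity of $p$ is immediate from it being a bundle projection, and injectivity of $i$ follows from freeness of the $A$-action evaluated at $e_H$. The only point requiring care is exactness at $H$: by definition of a principal bundle the fiber $p^{-1}(\{e\})$ is a single $A$-orbit, namely $A\cdot e_H=\{i(a):a\in A\}=\operatorname{im}(i)$, whence $\ker p=\operatorname{im}(i)$. Together with the given centrality, this yields the central extension.

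The main obstacle is the local-triviality step hidden inside $(i)\Rightarrow(ii)$: producing local sections of $p$, equivalently showing that the surjective homomorphism $p$ is a submersion. I would establish this from the equivariance $p\circ L_h=L_{p(h)}\circ p$, which gives $dp_h=dL_{p(h)}\circ dp_e\circ dL_{h^{-1}}$, so $dp$ has constant rank. If this constant rank were less than $\dim G$, then by the constant-rank theorem $p(H)$ would be locally contained in lower-dimensional slices, hence, $H$ being second countable, a countable union of measure-zero sets and so of measure zero in $G$, contradicting surjectivity. Thus $p$ is a submersion, local sections exist, and everything else reduces to bookkeeping with the exact sequence and the free, fiberwise-transitive action of $A$.
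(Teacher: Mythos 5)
Your proof is correct and follows essentially the same route as the paper: the forward direction rests on identifying $\ker p\cong A$ as a closed central subgroup and invoking the quotient (homogeneous space) theorem to get the principal $A$-bundle structure, and the converse reads exactness at $H$ off from the fact that the fiber over $e$ is the single $A$-orbit $\operatorname{im}(i)$. You simply fill in details the paper leaves implicit (the first isomorphism theorem, the constant-rank/submersion argument for local sections, and the explicit trivialization $(g,a)\mapsto i(a)s(g)$), which is a welcome expansion rather than a different approach.
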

\begin{proof}
We prove this by double implication:
    \begin{enumerate}
        \item $"\Rightarrow"$: Suppose the sequence given in the Lemma is a central extension of $G$ by $A$: as $A$ is closed, by Cartan's closed subgroup theorem it follows that $i$ is a Lie group embedding. By the homogeneous space theorem, $H$ becomes a principal $A$-bundle.
        \item $"\Leftarrow"$: Suppose $H \to G$ is a principal $A$-bundle, such that $i(A) \subseteq Z(H)$. By definition, $H \to G$ is a surjection, and as the local trivializations restrict to diffeomorphisms on fibers, it follows that $A \to E$ is a Lie group injection. By assumption $i(A) \subseteq Z(H)$, which shows that the sequence
        \[\begin{tikzcd}
	{\{e\}} & A & H & G & {\{e\}}
	\arrow[from=1-1, to=1-2]
	\arrow["i", from=1-2, to=1-3]
	\arrow["p", from=1-3, to=1-4]
	\arrow[from=1-4, to=1-5]
\end{tikzcd}\]
is a central extension of $G$ by $A$.
    \end{enumerate}
    \end{proof}
\begin{Corollary}
    $H^2_{s,gr}(G,A)$ does not classify all central extensions of $G$ by $A$ in the Lie group theoretical sense, but only the ones that admit a global smooth section (i.e., a smooth map $s:G \to H$ such that $p \circ s=\text{id}_{G}$), which are exactly the trivial extensions. A fortiori, $H^2_{s,gr}(G,A) = [0].$
\end{Corollary}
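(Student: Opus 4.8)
The plan is to use Lemma \ref{extensionbundle} to recast everything in bundle-theoretic language and then to match globally smooth cocycles with sectioned extensions. First I would set up the two mutually inverse constructions linking $H^2_{s,gr}(G,A)$ with central extensions carrying a global smooth section. Given a normalized smooth cocycle $\omega$, one forms the product model $H_\omega = A \times G$ with twisted multiplication $(a_1,g_1)\cdot(a_2,g_2) = (a_1 + a_2 + \omega(g_1,g_2),\, g_1 g_2)$; smoothness of $\omega$ makes this a Lie group, and $s_\omega\colon g \mapsto (0,g)$ is a global smooth section of $p\colon H_\omega \to G$. Conversely, from any central extension equipped with a global smooth section $s$ one recovers a smooth map into $A$ by $\omega_s(g_1,g_2) := i^{-1}\!\left(s(g_1)\,s(g_2)\,s(g_1 g_2)^{-1}\right)$, which lands in $i(A)=\ker p$ because applying $p$ gives the identity, and which is smooth because $s$ and the group operations are. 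Checking that a change of section shifts $\omega_s$ by a smooth coboundary is routine and yields a bijection between $H^2_{s,gr}(G,A)$ and equivalence classes of sectioned central extensions.

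The second step is the bundle-theoretic input. By Lemma \ref{extensionbundle} a central extension is the same datum as a principal $A$-bundle $p\colon H \to G$ with $i(A)\subseteq Z(H)$, and for principal bundles the existence of a global smooth section is equivalent to triviality: a section $s$ produces the equivariant trivialization $G \times A \to H$, $(g,a)\mapsto s(g)\cdot i(a)$, with smooth inverse, while the trivial bundle obviously carries the zero section. Combining this with the first step shows that the extensions seen by $H^2_{s,gr}(G,A)$ are exactly those whose underlying $A$-bundle is trivial (indeed, the product model $H_\omega = A\times G$ is literally the trivial bundle). In particular, any central extension built on a topologically nontrivial $A$-bundle over $G$ is invisible to $H^2_{s,gr}$, which already establishes that smooth global cocycles cannot classify all Lie-theoretic central extensions.

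The final step, reaching $H^2_{s,gr}(G,A)=[0]$, is the one I expect to be the main obstacle, and it is where the topological/algebraic dichotomy must be handled carefully. The natural argument is that a global smooth section trivializes the bundle, identifying the extension with $G\times A$, and that feeding this trivialization back into $\omega_s$ should exhibit it as a smooth coboundary. The delicate point is that a \emph{bundle} trivialization is strictly weaker than a \emph{group} splitting: the transported multiplication on $G\times A$ may remain twisted, so one must argue that the section can be corrected so that $\omega_s$ becomes a genuine coboundary rather than merely a smooth cocycle on a trivial bundle. This residual twisting is exactly what the algebraic obstruction $H^2(\mathfrak{g},\mathbb{R})$ measures, and I would expect the clean resolution to require the passage announced in the introduction to cocycles that are smooth only near the identity, where the global topology is quotiented away and the comparison with Lie algebra cohomology becomes exact.
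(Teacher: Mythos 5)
Your first two steps are essentially the paper's own proof: the paper likewise builds the cocycle $f(g,h)=s(g)\cdot s(h)\cdot s(gh)^{-1}$ from a section, invokes Lemma \ref{extensionbundle} to view the extension as a principal $A$-bundle, and notes that a globally smooth section forces that bundle to be trivial. Where you stop is exactly where the paper makes its decisive move: it asserts that triviality of the bundle ``means that the sequence splits, which implies that $[f]$ is cohomologous to $[0]$.'' You are right to be suspicious of that inference, and your diagnosis is accurate: a global smooth section trivializes the \emph{bundle}, whereas a splitting of the \emph{extension} requires a section that is a group homomorphism, and the transported multiplication on $A\times G$ can remain twisted by a globally smooth cocycle that is not a coboundary. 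Your closing hope that passing to cocycles smooth only near the identity would rescue the conclusion does not help, since the claim to be proved concerns $H^2_{s,gr}$ itself.

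In fact the obstacle you identified is not a missing lemma but a genuine failure of the conclusion in the stated generality: for $G=\mathbb{R}^{2n}$ and $A=\mathbb{R}$ the cocycle $\omega\bigl((q,p),(q',p')\bigr)=q\cdot p'$ is globally smooth and is a $2$-cocycle, yet it is not a coboundary (over an abelian group every coboundary $\eta(g)+\eta(h)-\eta(gh)$ is symmetric in $g$ and $h$, while $\omega$ has non-zero antisymmetric part), so $H^2_{s,gr}(\mathbb{R}^{2n},\mathbb{R})\neq[0]$ even though the corresponding Heisenberg extension is a trivial bundle carrying a global smooth section --- an example the paper itself relies on in Section \ref{section3}. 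So your proposal correctly reproduces the sound part of the argument (globally smooth cocycles only see topologically trivial bundles and therefore miss extensions such as $\widetilde{G}\to G$), and the step you could not complete is a step that cannot be completed as stated: ``admits a global smooth section'' is strictly weaker than ``is a trivial extension,'' and the final identity $H^2_{s,gr}(G,A)=[0]$ does not follow from it.
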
 
\begin{proof}
    The proof follows the same steps as in the algebraic case (see \cite{lassueur2021}). We choose a section $s : G \to H$ for the central extensions that is not generally a group homomorphism but is of the form 
    $$
    s(g) \cdot s(h) = f(g, h) \cdot s(g h)
    $$
    for some element $f(g, h) \in A$. This defines a map
    \begin{equation*}
    \begin{array}{llll}
    f: &G \times G &\rightarrow &A \\ 
       & (g, h)    &\mapsto         &f(g, h) \equiv s(g) \cdot s(h) \cdot s(g h)^{-1}
    \end{array}
    \end{equation*}
    that can be shown to be a $2$-cocycle.
    One associates to an equivalence class of central extensions an element  $[f] \in H^2_{s,gr}(G,A)$. Since the group $H^2_{s,gr}(G,A)$ contains only smooth $2$-cocycles, the map $s$ of the central extension must also be smooth. By Lemma \ref{extensionbundle}, it follows that the sections of the central extensions can be seen as sections of the principal $A$-bundle $H \to G$. As we required the map $s$ to be globally smooth, it follows that the principal $A$-bundle is the trivial bundle. Equivalently, this means that the sequence splits, which implies that $[f]$ is cohomologous to $[0]$, so $H^2_{s,gr}(G,A)=[0]$.
    Therefore, $H^2_{s,gr}(G,A) = [0]$ classifies Lie group extensions that admit a global smooth section, i.e., the topologically trivial extensions.
\end{proof}
A quite general class of examples where the central extension does not admit a global section occurs when there is a topological obstruction. This happens for Lie groups $G$ that are not simply connected.
\begin{Example}
     \[\begin{tikzcd}
	{\{e\}} & \pi_1(G) & \widetilde{G} & G & {\{e\}}
	\arrow[from=1-1, to=1-2]
	\arrow["i", from=1-2, to=1-3]
	\arrow["p", from=1-3, to=1-4]
	\arrow[from=1-4, to=1-5]
\end{tikzcd}\]
This is a central extension of $G$ by $\pi_1(G)$, but it does not admit a global smooth section unless $G$ is simply connected, as the sequence does not split. Physically relevant examples include:
\begin{enumerate}
    \item The rotation group: $G=SO(3)$, in which case $\pi_{1}(G)=\{1,-1\}$ and $\widetilde{G}=SU(2)$;
    \item The four-dimensional proper ortochronous Lorentz group: $G=SO^{+}(1,3)$, in which case $\pi_{1}(G)=\{1,-1\}$ and $\widetilde{G}=SL(2,\mathbb{C})$.
    \item The metaplectic group for a symplectic vector space $(V,\omega)$: $G=Sp(V,\omega)$,  similarly $\pi_1(G)=\{-1,1\}$ and $\widetilde{G}=Mp(V,\omega)$. This example is relevant in half-form correction in the context of geometric quantization \cite{tuynman2016,schottenloher2024}.
\end{enumerate}
\end{Example}
The proper notion to classify Lie group extensions of $G$ by $A$ is the cohomology theory based on cocycles that are smooth around the identity:
\begin{equation*}
       H^2_{es,gr}(G,A):=\{\omega: G \times G \to A| \omega \; \; \text{is a cocycle, smooth around} \; \; e \in G\}/ \sim.
\end{equation*}
\begin{Proposition}
    There exists a group isomorphism between $\operatorname{Ext}_{Lie}(G,A)$ and $H^2_{es,gr}(G,A)$.
\end{Proposition}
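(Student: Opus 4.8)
The plan is to construct explicit maps in both directions and show that they are mutually inverse group isomorphisms. For the forward direction, I start from a central extension $H$ of $G$ by $A$. By Lemma \ref{extensionbundle}, $p : H \to G$ is a principal $A$-bundle, so there is an open neighborhood $U$ of $e \in G$ together with a smooth local section $s_0 : U \to H$ satisfying $p \circ s_0 = \mathrm{id}_U$. I extend $s_0$ to a (set-theoretic, in general non-smooth) global section $s : G \to H$ with $s|_U = s_0$; such an extension exists because $p$ is surjective. Define $f(g,h) := s(g)\, s(h)\, s(gh)^{-1}$, which lands in $A = \ker p$ by construction. Associativity of the product in $H$ yields the $2$-cocycle identity, and for $(g,h)$ in a neighborhood of $(e,e)$ small enough that $g,h,gh \in U$, all three factors are computed by the smooth section $s_0$, so $f$ is smooth around the identity. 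This assigns to $[H]$ a class $[f] \in H^2_{es,gr}(G,A)$.

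Next I verify well-definedness. A different choice of global section $s'$ agreeing with a smooth local section near $e$ differs from $s$ by a map $\beta : G \to A$, $s'(g) = \beta(g)\, s(g)$, with $\beta$ smooth near $e$; the associated cocycles then differ by the coboundary $\delta\beta$, which is itself smooth near $e$, so the class $[f]$ is unchanged. Likewise an equivalence $\phi : H_1 \to H_2$ of extensions carries a section of $H_1$ to a section of $H_2$ and intertwines the two cocycles up to such a coboundary, so equivalent extensions map to the same cohomology class. This gives a well-defined map $\Phi : \operatorname{Ext}_{Lie}(G,A) \to H^2_{es,gr}(G,A)$.

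For the backward direction --- the crux of the argument --- take a cocycle $f$ smooth around $e$ and set $H := G \times A$ as a topological group with the twisted multiplication $(g,a)(h,b) = (gh,\, ab\, f(g,h))$; the cocycle identity guarantees associativity and the normalization $f(e,e)=e$ gives the unit. The essential point is to promote $H$ to a Lie group. On $W \times A$, with $W \subseteq U$ a symmetric neighborhood of $e$ on which $f$ is smooth, the product structure furnishes a smooth chart around the identity, and I transport it to all of $H$ by left translations $L_{(g,a)}$, declaring each to be a diffeomorphism. The content of the proof is that the transition maps between overlapping translated charts are smooth: such a transition has the form $L_{(g_1,a_1)}^{-1} L_{(g_2,a_2)}$, and on its domain the relevant group products land in $W$, so it reduces to an expression built from the multiplication in $G$, in $A$, and finitely many evaluations of $f$ on arguments near $e$ --- all smooth by hypothesis. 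This is exactly where ``smooth around the identity'' is the right and necessary condition: it localizes the only possible failure of smoothness to a neighborhood of $e$, which left translation then spreads consistently. By Lemma \ref{extensionbundle} the resulting smooth principal $A$-bundle $H \to G$ is a central extension, yielding a map $\Psi$ in the reverse direction.

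Finally I check that $\Phi$ and $\Psi$ are mutually inverse: starting from $f$, the obvious section $s(g) = (g,e)$ of $\Psi(f)$ is smooth near $e$ and recovers exactly $f$; starting from $H$, the cocycle built from a section reconstitutes $H$ up to the equivalence induced by that section. It remains to see that both maps respect the group structures, where $\operatorname{Ext}_{Lie}(G,A)$ carries the Baer sum and $H^2_{es,gr}(G,A)$ the pointwise product of cocycles; this is a direct verification that the Baer sum of two extensions corresponds to the product $f_1 f_2$ of representing cocycles. The main obstacle throughout is the backward construction: confirming that a cocycle smooth only near $e$ genuinely equips $H$ with a Lie group structure, i.e.\ the smoothness of the translated transition functions, since everything else parallels the purely algebraic classification by $H^2_{gr}(G,A)$.
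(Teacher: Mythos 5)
Your overall strategy coincides with the paper's (which is only a two-line sketch appealing to Lemma~\ref{extensionbundle}): use the principal $A$-bundle structure to get a section that is smooth near $e$, read off a cocycle smooth around the identity, and conversely rebuild the extension as a twisted product $G\times_f A$ whose smooth structure is obtained by left-translating a chart at the identity. Your forward direction, well-definedness check, and the mutual-inverse and Baer-sum verifications are correct and considerably more complete than what the paper records.

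There is, however, one genuine gap in the backward direction, and it sits exactly at the point you identify as the crux. Checking that the transition functions $L_{(g_1,a_1)}^{-1}L_{(g_2,a_2)}$ between overlapping translated charts are smooth only shows that $H=G\times_f A$ is a smooth manifold; it does not yet show that $H$ is a Lie group, i.e.\ that multiplication and inversion are \emph{globally} smooth. Reading the product $\phi_x(u)\,\phi_y(v)$ in the chart centered at $xy$ gives $(u,v)\mapsto \bigl(y^{-1}uy\bigr)v$, so global smoothness of multiplication reduces to smoothness near the identity of conjugation $c_y$ for an \emph{arbitrary} fixed $y=(g,a)\in H$. The $A$-component of $c_{(g,a)}(h,b)$ is $b\cdot f(g,h)\,f(gh,g^{-1})\,f(g,g^{-1})^{-1}$, which involves $f$ evaluated with one argument far from $e$; smoothness of $f$ only near $(e,e)$ does not control this, and your remark that ``the relevant group products land in $W$'' does not apply here. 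The repair uses connectedness of $G$: for $g$ in a sufficiently small symmetric neighborhood $W$ all arguments of $f$ occurring in $c_g$ can be kept inside the domain of smoothness (after shrinking $W$ and using the cocycle identity), so $c_g$ is smooth near $e$ for $g\in W$; since $G$ is connected, $W$ generates $G$, conjugations compose, and the central $A$ acts trivially by conjugation, so $c_y$ is smooth near $e$ for every $y\in H$. With that supplement your construction closes, but as written the step ``the resulting smooth principal $A$-bundle $H\to G$ is a central extension'' silently assumes the group operations are smooth.
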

\begin{proof}
    This is an immediate corollary of Lemma \ref{extensionbundle}. Everything goes the same way as in the algebraic group case; however, now, we are able to define a locally smooth section $G \to H$ always, as $H \to G$ is a principal $A$-bundle, which admits a locally smooth section around $e$.
\end{proof}

\begin{Definition}
Let $\mathfrak{a}$ be an abelian Lie algebra, and $\mathfrak{g}$ another Lie algebra. A \textit{central extension} of $\mathfrak{g}$ by $\mathfrak{a}$ is a short exact sequence of Lie algebra homomorphisms
\[
\begin{tikzcd}
0 \arrow[r] & \mathfrak{a} \arrow[r, "\iota", hook] & \mathfrak{e} \arrow[r, "\pi", two heads] & \mathfrak{g} \arrow[r] & 0
\end{tikzcd}
\]
so that $[\mathfrak{a}, \mathfrak{e}] = 0$.
\end{Definition}

Note that a central extension of a Lie group $G$ by $A$ always induces a central Lie algebra extension of $\mathfrak{g}$ by $\mathfrak{a}$ as:
\[\begin{tikzcd}
	{0} & \mathfrak{a} & \mathfrak{h} &  \mathfrak{g} & {0}
	\arrow[from=1-1, to=1-2]
	\arrow["i", from=1-2, to=1-3]
	\arrow["p", from=1-3, to=1-4]
	\arrow[from=1-4, to=1-5].
\end{tikzcd}\]

The converse, in general, does not hold. A sufficient condition for the converse to hold is for $G$ to be simply connected. In this case, a central extension of $\mathfrak{g}$ gives rise to a central extension of $G$. However, this condition is not necessary: the most general result in the reverse direction is due to Neeb \cite{neeb1996}.

\begin{Theorem}[Neeb]
    Let $G,H$ be connected finite-dimensional Lie groups and $A$ be a finite-dimensional connected Lie group and denote by $\mathfrak{g},\mathfrak{h},\mathfrak{a}$ the corresponding Lie algebras. Consider the bracket on $\mathfrak{h}$ viewed as $\mathfrak{g} \times \mathbb{R}$ to be defined by
    \begin{equation*}
        [(X,t),(X',t')]:=([X,X'],\omega([X,X'])),
    \end{equation*}
    where $\omega \in \wedge^2(\mathfrak{g}^{*})$ is a cocycle defining the Lie algebra extension of $\mathfrak{g}$ by $\mathfrak{a}$. Denote by $\Omega$  the corresponding left-invariant $2$-form on $G$:
    \begin{equation*}
        \Omega(g)(d \lambda_{g}(\{e\})v,d \lambda_{g}(\{e\})w):=\omega(v,w), \; \; v,w \in \mathfrak{g} \cong T_{\{e\}}G, \; \; \lambda_{g}(x):=gx.
    \end{equation*}
    Furthermore, denote by $\rho_{g}: x \mapsto xg$ the right translation and similarly let $X_{r}(g)$ be the right-invariant vector field on $G$ given by $X_{r}(g)=d \rho_{g}(1).X$ for $g \in G$. Then, for a Lie algebra central extension
    \[\begin{tikzcd}
	{0} & \mathfrak{a} & \mathfrak{h} &  \mathfrak{g} & {0}
	\arrow[from=1-1, to=1-2]
	\arrow["i", from=1-2, to=1-3]
	\arrow["p", from=1-3, to=1-4]
	\arrow[from=1-4, to=1-5].
\end{tikzcd}\]
A Lie group central extension
    \[\begin{tikzcd}
	{\{e\}} & A & H &  G & {\{e\}}
	\arrow[from=1-1, to=1-2]
	\arrow["i", from=1-2, to=1-3]
	\arrow["p", from=1-3, to=1-4]
	\arrow[from=1-4, to=1-5].
\end{tikzcd}\]
exists iff for each $X \in \mathfrak{g}$ the $1$-form $i(X_r) \Omega$ on $G$ is exact.
\end{Theorem}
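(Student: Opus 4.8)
The plan is to first isolate the genuinely cohomological content of the statement, then dispatch the two implications by connection-theoretic means, reducing the hard direction to a descent problem along the universal covering. I would begin by recording two facts that hold independently of either implication. Since $\omega$ is a Lie algebra $2$-cocycle, the Chevalley--Eilenberg correspondence between left-invariant forms and Lie algebra cochains identifies the de Rham differential with the Lie algebra coboundary, so $\Omega$ is closed, $d\Omega=0$. Next, the flow of the right-invariant field $X_r$ is left translation $\lambda_{\exp(tX)}$, under which the left-invariant form $\Omega$ is invariant; hence $\mathcal{L}_{X_r}\Omega=0$, and Cartan's magic formula gives
\[
0=\mathcal{L}_{X_r}\Omega=d\bigl(\iota_{X_r}\Omega\bigr)+\iota_{X_r}\,d\Omega=d\bigl(\iota_{X_r}\Omega\bigr),
\]
so each $\iota_{X_r}\Omega=i(X_r)\Omega$ is automatically closed. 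The content of the theorem is therefore whether these closed $1$-forms are exact; by de Rham duality this is equivalent to the vanishing of every period $\oint_\gamma \iota_{X_r}\Omega$ over loops $[\gamma]\in\pi_1(G)$. Finite-dimensionality enters here: by Cartan's theorem $\pi_2(G)=0$, so no obstruction survives at the level of $2$-spheres and the only possible obstruction sits over $\pi_1(G)$.

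\textbf{The implication} $(\Rightarrow)$. Suppose the central extension $H$ exists. By Lemma~\ref{extensionbundle}, $p\colon H\to G$ is a principal $A$-bundle. Left-translating the vector-space complement $\sigma(\mathfrak g)\subseteq\mathfrak h$ of $\mathfrak a$ determined by a chosen section yields a left-invariant horizontal distribution; since $A$ is central, this distribution and its complementary verticals are $A$-invariant, so the vertical projection is a principal connection $\theta\in\Omega^1(H,\mathfrak a)$, and the defect of $\sigma$ to be a Lie algebra homomorphism --- namely $\omega$ --- is exactly its curvature, whence $d\theta=p^{*}\Omega$ (the bracket term vanishing as $\mathfrak a$ is abelian). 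Because $p$ is a homomorphism, $X_r$ lifts to the right-invariant field $\widetilde X_r$ on $H$, which is globally defined and $A$-invariant. I would then set $F_X:=\theta(\widetilde X_r)$, an $A$-invariant function descending to $\bar F_X$ on $G$; since $\theta$ is left-invariant one has $\mathcal{L}_{\widetilde X_r}\theta=0$, so
\[
dF_X=-\,\iota_{\widetilde X_r}\,d\theta=-\,\iota_{\widetilde X_r}\,p^{*}\Omega=-\,p^{*}\bigl(\iota_{X_r}\Omega\bigr).
\]
Descending gives $d\bar F_X=-\,\iota_{X_r}\Omega$, so $\iota_{X_r}\Omega$ is exact for every $X$.

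\textbf{The implication} $(\Leftarrow)$. Conversely, assume every $\iota_{X_r}\Omega$ is exact. On the simply connected cover $\widetilde G$, the sufficiency already recorded in the text furnishes a Lie group central extension $\tilde p\colon\widetilde H\to\widetilde G$ integrating $\mathfrak h$. Writing $\Gamma_G:=\ker(\widetilde G\to G)\cong\pi_1(G)\subseteq Z(\widetilde G)$, a purely group-theoretic reduction shows that $H$ exists as a central extension of $G=\widetilde G/\Gamma_G$ by $A$ if and only if there is a discrete subgroup $\Gamma\subseteq Z(\widetilde H)$ with $\tilde p(\Gamma)=\Gamma_G$ and $\Gamma\cap A=\{e\}$; one then sets $H:=\widetilde H/\Gamma$. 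For a lift $s(\gamma)$ of $\gamma\in\Gamma_G$, the map $h\mapsto s(\gamma)\,h\,s(\gamma)^{-1}h^{-1}$ is a homomorphism $\widetilde H\to A$ that, by centrality of $A$, is independent of the chosen lift and factors through a character $\chi_\gamma\colon\widetilde G\to A$ whose differential is precisely the flux $X\mapsto\oint_\gamma \iota_{X_r}\Omega$. Exactness of every $\iota_{X_r}\Omega$ forces all these periods to vanish, so each lift is automatically central; since $A$ is a connected abelian, hence divisible, Lie group, the resulting central section over the finitely generated abelian group $\Gamma_G$ can be rectified to a homomorphism, producing $\Gamma$ and thus $H$.

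\textbf{Main obstacle.} The delicate point, on which I would spend the most care, is the identification in the converse direction of $d\chi_\gamma$ with the period $\oint_\gamma\iota_{X_r}\Omega$: one must relate the group commutators inside $\widetilde H$ to the Lie algebra cocycle $\omega$ and then to $\Omega$ by a careful integration, so that the centrality obstruction for the lift and the exactness of the family $\{i(X_r)\Omega\}_{X\in\mathfrak g}$ coincide. The vanishing of $\pi_2(G)$ is what guarantees there is no competing obstruction at the $2$-sphere level, so that exactness of these $1$-forms is not merely necessary but also sufficient.
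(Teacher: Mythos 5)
The paper does not actually prove this theorem --- it is stated as a citation to Neeb's work --- so there is no in-paper argument to compare against; I can only assess your proposal on its own terms. Your overall architecture is the right one, and in fact it is essentially Neeb's own: the preliminary reductions are correct ($\Omega$ is closed because $\omega$ is a Chevalley--Eilenberg cocycle, $\mathcal{L}_{X_r}\Omega=0$ because the flow of a right-invariant field is left translation, hence each $i(X_r)\Omega$ is closed and the question is one of periods over $\pi_1(G)$, with $\pi_2(G)=0$ disposing of the competing period obstruction in finite dimensions). The forward implication via the left-invariant principal connection $\theta$ with $d\theta=\pm p^{*}\Omega$ and the descending potential $\bar F_X$ is complete and correct up to a sign convention. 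In the converse, the reduction to finding a discrete central lift $\Gamma$ of $\Gamma_G=\pi_1(G)$ inside $\widetilde H$, the verification that $h\mapsto s(\gamma)hs(\gamma)^{-1}h^{-1}$ is a lift-independent character $\chi_\gamma$ valued in $A$, and the splitting of the abelian extension $1\to A\to \tilde p^{-1}(\Gamma_G)\to\Gamma_G\to 1$ via divisibility of $A$ are all sound.

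The genuine gap is the one you yourself flag: the identification $d\chi_\gamma(X)=\oint_\gamma i(X_r)\Omega$ is asserted, not proved, and it is not a technical footnote --- it is the entire content of the theorem. Everything else in your converse direction is soft (covering theory, divisible abelian groups); without this identification you have only shown that \emph{some} character $\chi_\gamma$ obstructs descent, not that its vanishing is equivalent to exactness of the family $\{i(X_r)\Omega\}$. Establishing it requires expressing $s(\gamma)$ as the endpoint of a horizontal lift of a loop $\gamma$ representing the homotopy class, computing the conjugation of a one-parameter subgroup $\exp(t\tilde X)$ by that lift, and showing that the resulting $A$-valued defect is the holonomy integral $\int_\gamma\theta(\widetilde X_r)\,$-type quantity, which by your own forward-direction computation has derivative $-i(X_r)\Omega$; one must also check that the answer depends only on $[\gamma]\in\pi_1(G)$ (this is where closedness of $i(X_r)\Omega$, hence $d\Omega=0$ and $\pi_2(G)=0$, re-enters). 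Until that integration argument is written out, the proof is an accurate roadmap of Neeb's theorem rather than a proof of it.
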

Nevertheless, even if the criteria for Neeb's theorem are not satisfied, Lie algebra extensions give rise to Lie group extensions of the universal cover, which is a consequence of the following theorem \cite{landsman1998,foundations2017,vanderschaaf2017}.
\begin{Theorem}\label{theorem1}
    If $G$ is a connected, simply connected Lie group, then the Lie group cohomology $H^2_{es,gr}(G,U(1))$ is isomorphic to the Lie algebra cohomology $H^2(\mathfrak{g},\mathbb{R})$.
\end{Theorem}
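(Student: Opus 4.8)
The plan is to avoid manipulating locally smooth cocycles directly and instead route everything through central extensions. By the preceding Proposition we have $H^2_{es,gr}(G,U(1)) \cong \operatorname{Ext}_{Lie}(G,U(1))$, and on the algebraic side the standard classification identifies $H^2(\mathfrak{g},\mathbb{R})$ with the equivalence classes of Lie algebra central extensions of $\mathfrak{g}$ by $\mathbb{R} = \operatorname{Lie}(U(1))$. It therefore suffices to construct an isomorphism of abelian groups between $\operatorname{Ext}_{Lie}(G,U(1))$ and the Lie algebra central extensions of $\mathfrak{g}$ by $\mathbb{R}$, and then transport it across these two identifications.

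First I would define the differentiation map $D$ by applying the Lie functor: as already noted in the text, every Lie group central extension of $G$ by $U(1)$ induces a Lie algebra central extension of $\mathfrak{g}$ by $\mathbb{R}$. One checks that $D$ descends to equivalence classes and respects the Baer sum, so that it is a homomorphism of abelian groups; this is a routine functoriality verification that I would not belabor. For injectivity, suppose $H$ is a central extension whose associated Lie algebra extension $\mathfrak{h}$ is trivial, so $\mathfrak{h} \cong \mathfrak{g} \oplus \mathbb{R}$ admits a Lie algebra section $\sigma : \mathfrak{g} \to \mathfrak{h}$. Since $G$ is simply connected, $\sigma$ integrates to a Lie group homomorphism $s : G \to H$; because $p \circ s$ and $\operatorname{id}_G$ are homomorphisms out of the connected group $G$ with equal differentials they must agree, so $s$ splits the sequence and $H$ is the trivial extension. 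Hence $\ker D$ is trivial.

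Surjectivity is where I expect the genuine work, and where the simply-connected hypothesis is indispensable. Given a Lie algebra central extension $\mathfrak{h}$ of $\mathfrak{g}$ by $\mathbb{R}$, Lie's third theorem produces a simply connected Lie group $\widetilde{H}$ with $\operatorname{Lie}(\widetilde{H}) = \mathfrak{h}$. Since $\widetilde{H}$ is simply connected, the projection $\mathfrak{h} \to \mathfrak{g}$ integrates to a Lie group homomorphism $\Pi : \widetilde{H} \to G$, which is surjective because its differential is and $G$ is connected. The kernel $N := \ker \Pi$ is a closed central subgroup with $\operatorname{Lie}(N) = \mathbb{R}$, and the long exact homotopy sequence of the fibration $N \to \widetilde{H} \to G$, using $\pi_1(G) = \{e\}$ together with the connectedness of $\widetilde{H}$, forces $N$ to be connected; thus $N \cong \mathbb{R}$ or $N \cong U(1)$. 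In the first case I would pass to $H := \widetilde{H}/\mathbb{Z}$ for the integer lattice $\mathbb{Z} \subset N \cong \mathbb{R}$, obtaining a central extension of $G$ by $U(1) = \mathbb{R}/\mathbb{Z}$ with the prescribed Lie algebra; in the second case $\widetilde{H}$ already serves. Either way the class of $\mathfrak{h}$ lies in the image of $D$.

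The main obstacle is precisely this control of the global structure of $N$: without $\pi_1(G) = \{e\}$ the connecting homomorphism $\pi_1(G) \to \pi_0(N)$ may be nontrivial, $N$ can be disconnected, and the integrated object then fails to be a central extension by a one-dimensional group. This is exactly the topological obstruction emphasized earlier in the paper, which is why the statement is restricted to simply connected $G$. Assembling the three steps yields $\operatorname{Ext}_{Lie}(G,U(1)) \cong H^2(\mathfrak{g},\mathbb{R})$, and composing with the Proposition gives the desired isomorphism $H^2_{es,gr}(G,U(1)) \cong H^2(\mathfrak{g},\mathbb{R})$.
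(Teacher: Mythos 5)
Your argument is essentially correct, but note that the paper does not actually prove this theorem: it is stated as a known result with references to Landsman, van der Schaaf, et al., so there is no in-text proof to compare against. Your route is the extension-theoretic one: you use the paper's preceding Proposition to replace $H^2_{es,gr}(G,U(1))$ by $\operatorname{Ext}_{Lie}(G,U(1))$, and then build the isomorphism with $H^2(\mathfrak{g},\mathbb{R})$ via the differentiation functor, proving injectivity by integrating a Lie algebra splitting (legitimate, since $G$ simply connected forces $p\circ s=\operatorname{id}_G$ once the differentials agree) and surjectivity by Lie's third theorem plus the homotopy exact sequence of the principal bundle $N\to\widetilde H\to G$. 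The cited sources instead typically work at the cocycle level, integrating the left-invariant $2$-form $\Omega$ attached to a Lie algebra cocycle over geodesic simplices to manufacture a group cocycle smooth near $e$; your version trades that analysis for Lie's second and third theorems and is arguably cleaner given that the paper has already established $\operatorname{Ext}_{Lie}(G,A)\cong H^2_{es,gr}(G,A)$. Two small remarks: your case $N\cong U(1)$ is in fact vacuous, since $\pi_2(G)=0$ for any Lie group forces $\pi_1(N)=0$ when $\widetilde H$ is simply connected, so $N\cong\mathbb{R}$ always; and in the quotient step $H=\widetilde H/\mathbb{Z}$ you should fix the identification $N\cong\mathfrak{a}$ via the exponential and take the standard lattice, so that the induced map $\operatorname{Lie}(U(1))\to\mathfrak{a}$ is the identity and the constructed group extension really differentiates back to the \emph{given} class in $H^2(\mathfrak{g},\mathbb{R})$ rather than a rescaling of it. The deferred verification that $D$ respects Baer sums is genuinely routine but is what entitles you to conclude injectivity from triviality of the kernel, so it should at least be flagged as used.
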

We now have the tools to tackle cases $(i)-(iv)$, which we will formulate as proper mathematical theorems. The first and last cases are the well-known theorems due to Bargmann and Cassinelli, which we will not prove. Still, we provide an alternative proof for $(ii)$ without using the most general result due to Cassinelli $(iv)$, but using theorem \ref{theorem1}.
\begin{Theorem}[Bargmann]
    Let $G$ be a connected Lie group which satisfies
    \begin{equation*}
        \pi_{1}(G)=\{e\} \; \; \text{and} \; \; H^2(\mathfrak{g},\mathbb{R})=\{e\}.
    \end{equation*}
    Then every projective representation $\tau:G \to PU(\mathcal{H})$ has a lift to a unitary representation $\mathcal{R}:G \to U(\mathcal{H})$, that is, for every continuous homomorphism $\tau:G \to PU(\mathcal{H})$ there is a continuous homomorphism $\mathcal{R}: G \to U(\mathcal{H})$ with $\tau= \widehat{\pi} \circ \mathcal{R}$.
\end{Theorem}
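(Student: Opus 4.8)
The plan is to reduce the lifting problem to the triviality of a central extension and then feed this into the cohomological machinery developed above. Given the projective representation $\tau : G \to PU(\mathcal{H})$, I would first form the pullback of the defining sequence $\{e\} \to U(1) \to U(\mathcal{H}) \xrightarrow{\widehat{\pi}} PU(\mathcal{H}) \to \{e\}$ along $\tau$, that is, the fiber product
\[
G_\tau := G \times_{PU(\mathcal{H})} U(\mathcal{H}) = \{ (g, U) \in G \times U(\mathcal{H}) \mid \tau(g) = \widehat{\pi}(U) \},
\]
equipped with the projections $p : G_\tau \to G$, $(g,U) \mapsto g$, and $q : G_\tau \to U(\mathcal{H})$, $(g,U) \mapsto U$. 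The kernel of $p$ is exactly $\ker \widehat{\pi} = U(1)$ sitting centrally, and $p$ is surjective by surjectivity of $\widehat{\pi}$, so $G_\tau$ is a central extension of $G$ by $U(1)$ as topological groups; here I would use that $U(\mathcal{H})$ is a topological group for the strong operator topology, as recalled earlier.

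The key structural step is to promote $G_\tau$ to a genuine Lie group central extension, so that the classification results apply. Since $\widehat{\pi} : U(\mathcal{H}) \to PU(\mathcal{H})$ is a principal $U(1)$-bundle, it admits local continuous sections; pulling these back along the continuous map $\tau$ produces local sections of $p$ near the identity, exhibiting $G_\tau \to G$ as a locally trivial principal $U(1)$-bundle over the finite-dimensional base $G$. In particular $G_\tau$ is itself finite-dimensional, even though it is carved out of $U(\mathcal{H})$. By Lemma \ref{extensionbundle} this is equivalent to $G_\tau$ being a Lie group central extension of $G$ by $U(1)$, and by the Proposition establishing $\operatorname{Ext}_{Lie}(G, U(1)) \cong H^2_{es,gr}(G,U(1))$ it determines a class $[G_\tau] \in H^2_{es,gr}(G, U(1))$.

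Now I would invoke the two hypotheses. Because $G$ is connected with $\pi_1(G) = \{e\}$ it is simply connected, so Theorem \ref{theorem1} yields $H^2_{es,gr}(G, U(1)) \cong H^2(\mathfrak{g}, \mathbb{R})$. The algebraic hypothesis $H^2(\mathfrak{g}, \mathbb{R}) = \{e\}$ then forces $[G_\tau] = [0]$, i.e. $G_\tau$ is the trivial extension and therefore splits: there is a Lie group homomorphism $s : G \to G_\tau$ with $p \circ s = \mathrm{id}_G$. Setting $\mathcal{R} := q \circ s : G \to U(\mathcal{H})$ gives a continuous homomorphism, and since $s(g) = (g, \mathcal{R}(g))$ with $\tau(g) = \widehat{\pi}(\mathcal{R}(g))$ by the defining relation of $G_\tau$, we get $\widehat{\pi} \circ \mathcal{R} = \tau$, so $\mathcal{R}$ is the desired lift.

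The main obstacle is entirely analytic and lives in the second step: verifying that the topological central extension $G_\tau$ carries a compatible smooth structure. Concretely, $\tau$ is only assumed continuous for the strong operator topology, so the pulled-back local sections are a priori merely continuous, whereas $H^2_{es,gr}$ is built from cocycles that are smooth around the identity; closing this gap requires either a smoothing argument for the resulting local cocycle or an appeal to automatic-smoothness phenomena for continuous homomorphisms into finite-dimensional Lie groups. Once this regularity is secured, the remaining steps are formal consequences of Lemma \ref{extensionbundle}, the $\operatorname{Ext}_{Lie}$–$H^2_{es,gr}$ correspondence, and Theorem \ref{theorem1}.
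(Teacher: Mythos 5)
The paper deliberately omits a proof of this theorem (it states that the Bargmann and Cassinelli cases ``we will not prove'' and defers to \cite{barg} and Schottenloher's modern account), so there is no in-text argument to compare yours against; what you have written is an outline of precisely that standard modern proof. Your route --- pull back $U(1)\to U(\mathcal{H})\to PU(\mathcal{H})$ along $\tau$ to get a topological central extension $G_\tau$ of $G$ by $U(1)$, upgrade it to a Lie group extension, read off a class in $H^2_{es,gr}(G,U(1))$, and kill it using Theorem \ref{theorem1} together with $H^2(\mathfrak{g},\mathbb{R})=\{e\}$ --- is exactly the mechanism Section \ref{section2} is built to support, and the splitting you extract does yield the lift $\mathcal{R}=q\circ s$ with $\widehat{\pi}\circ\mathcal{R}=\tau$. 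The logic of the final two steps is sound.

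The unresolved part is the one you flag yourself, and it is not a technicality but the actual mathematical content of Bargmann's theorem: $\tau$ is only strongly continuous, so the pulled-back local sections and the resulting local cocycle are a priori merely continuous, whereas $H^2_{es,gr}$ is defined via cocycles smooth near $e$. Naming this gap is not the same as closing it; Bargmann closes it by showing every continuous local exponent is equivalent to a canonical (differentiable) one via the Iwasawa construction that the Appendix alludes to, and without that step your class $[G_\tau]$ does not yet live in the group to which Theorem \ref{theorem1} applies. Two smaller points also deserve an argument rather than an assertion: that $\widehat{\pi}:U(\mathcal{H})\to PU(\mathcal{H})$ admits local continuous sections in the strong operator topology (true, but itself a theorem, cf.\ \cite{simms1970}), and that the finite-dimensional topological group $G_\tau$ carries a compatible smooth structure making Lemma \ref{extensionbundle} applicable (either by building charts from the local sections and the smoothed cocycle, or by invoking the solution of Hilbert's fifth problem). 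With those inputs supplied, your argument is complete and coincides with the proof the paper points to in the literature.
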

\begin{Remark}
    Let us note that historically, Bargmann, in his seminal paper \cite{barg}, did not use the language of  Lie algebra cohomology. The modern formulation of his theorem, together with a simplified proof using that $U(\mathcal{H})$ is a topological group, can be found in a recent writing of Schottenloher \cite{schottenloher2008}. See \cite{simms1971} for a proof without multipliers.
\end{Remark}
\begin{Theorem}\label{poincarethm}
    Let $G$ be a connected Lie group, which satisfies
    \begin{equation}
        \pi_1(G) \neq \{e \} \; \; \text{and} \; \; H^2(\mathfrak{g},\mathbb{R})=\{e\}.
    \end{equation}
    Then to every projective representation $\tau:G \to PU(\mathcal{H})$ there corresponds a unitary representation $\mathcal{R}: \widetilde{G} \to U(\mathcal{H})$, where $\widetilde{G}$ is the universal cover of $G$.
\end{Theorem}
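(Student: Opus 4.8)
The plan is to reduce the statement to the simply connected case handled by Bargmann's theorem, by pulling the projective representation back along the universal covering homomorphism. Write $p:\widetilde{G}\to G$ for the universal cover; it is a smooth surjective homomorphism whose kernel $\pi_1(G)$ is a discrete central subgroup, and $\widetilde{G}$ carries the same Lie algebra $\mathfrak{g}$ as $G$. Composing, set $\widetilde{\tau}:=\tau\circ p:\widetilde{G}\to PU(\mathcal{H})$. Since $p$ is continuous and $\tau$ is a continuous homomorphism with respect to the strong operator topology, $\widetilde{\tau}$ is again a projective unitary representation, now of the simply connected group $\widetilde{G}$. The point of the construction is that $\widetilde{G}$ satisfies $\pi_1(\widetilde{G})=\{e\}$ while still having $H^2(\mathfrak{g},\mathbb{R})=\{e\}$, so the algebraic obstruction continues to vanish even though the topological one is precisely what forces the passage to $\widetilde{G}$.

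It remains to lift $\widetilde{\tau}$ to a unitary representation $\mathcal{R}:\widetilde{G}\to U(\mathcal{H})$, and I would do this through the central-extension mechanism rather than by quoting Cassinelli. Pull the defining central extension $\{e\}\to U(1)\to U(\mathcal{H})\xrightarrow{\widehat{\pi}}PU(\mathcal{H})\to\{e\}$ back along $\widetilde{\tau}$, i.e. form the fiber product
\[
E:=\{(\widetilde{g},U)\in\widetilde{G}\times U(\mathcal{H})\mid \widetilde{\tau}(\widetilde{g})=\widehat{\pi}(U)\}.
\]
Projection onto the first factor realizes $E$ as a central extension of $\widetilde{G}$ by $U(1)$, and by Lemma \ref{extensionbundle} the map $q:E\to\widetilde{G}$ is a principal $U(1)$-bundle; in particular $E$ is a finite-dimensional manifold even though $U(\mathcal{H})$ is not. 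The tautological second-factor map $\beta:E\to U(\mathcal{H})$, $(\widetilde{g},U)\mapsto U$, is a continuous homomorphism satisfying $\widehat{\pi}\circ\beta=\widetilde{\tau}\circ q$.

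Now I invoke the classification. By the Proposition identifying $\operatorname{Ext}_{Lie}(\widetilde{G},U(1))$ with $H^2_{es,gr}(\widetilde{G},U(1))$, together with Theorem \ref{theorem1}, simple connectedness of $\widetilde{G}$ gives
\[
H^2_{es,gr}(\widetilde{G},U(1))\cong H^2(\mathfrak{g},\mathbb{R})=\{e\}.
\]
Hence the extension $E$ is trivial, so it admits a global splitting homomorphism $s:\widetilde{G}\to E$ with $q\circ s=\operatorname{id}_{\widetilde{G}}$. Setting $\mathcal{R}:=\beta\circ s:\widetilde{G}\to U(\mathcal{H})$ then produces a continuous unitary representation with $\widehat{\pi}\circ\mathcal{R}=\widetilde{\tau}=\tau\circ p$, which is exactly the claimed correspondence. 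Equivalently, one could apply Bargmann's theorem directly to $\widetilde{\tau}$, since $\widetilde{G}$ meets both of its hypotheses; I prefer the extension route because it keeps the role of Theorem \ref{theorem1} explicit.

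The main obstacle is not the algebra but the analytic bookkeeping. One must check that pulling back along the a priori only-continuous map $\widetilde{\tau}$ yields a genuine Lie group central extension that is smooth around the identity, so that Lemma \ref{extensionbundle}, the $\operatorname{Ext}_{Lie}$-classification, and Theorem \ref{theorem1} — all stated for finite-dimensional Lie groups — legitimately apply to $E\to\widetilde{G}$; here one leans on the fibers being $U(1)$-cosets, which keeps $E$ finite-dimensional, and on the automatic smoothness of continuous homomorphisms of Lie groups. The second delicate point is upgrading the local section guaranteed by the vanishing cohomology class to a global splitting homomorphism, where simple connectedness of $\widetilde{G}$ is essential via the monodromy principle. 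Finally, one confirms that $s$, and hence $\mathcal{R}$, is continuous for the strong operator topology, which follows from continuity of $\beta$ and of the splitting.
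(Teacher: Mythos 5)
Your proposal is correct and follows essentially the same route as the paper: pass to the simply connected universal cover (which shares the Lie algebra $\mathfrak{g}$, hence the vanishing $H^2(\mathfrak{g},\mathbb{R})$), invoke Theorem \ref{theorem1} to conclude $H^2_{es,gr}\bigl(\widetilde{G},U(1)\bigr)\cong\{e\}$, and deduce that the associated $U(1)$-extension splits, yielding the unitary lift. Your explicit construction of the pullback extension $E$ and the remarks on smoothness of the cocycle near the identity merely flesh out steps the paper's proof leaves implicit.
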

\begin{proof}
    Since $\widetilde{G}$ is the universal cover of $G$, they are locally isomorphic, which implies that
\begin{equation}\label{this1}
H^2\left (\widetilde{\mathfrak{g}},\mathbb{R} \right )=H^2( \mathfrak{g}, \mathbb{R})=\{e\}.
    \end{equation}
As $\widetilde{G}$ is simply connected, by theorem \ref{theorem1}, we obtain that 
\begin{equation}\label{this2}
H^2_{es,gr}\left(\widetilde{G},U(1) \right) \cong H^2 \left( \widetilde{\mathfrak{g}},\mathbb{R}\right).
\end{equation}
Relations \eqref{this1} and \eqref{this2} immediately yield
\begin{equation}
    H^2_{es,gr} \left( \widetilde{G},U(1) \right) \cong \{e\}.
\end{equation}
Thus, the projective representations are all equivalent to the trivial class in $H^2_{es,gr}\left( \widetilde{G},U(1) \right)$, which makes the sequence split, making them ordinary representations.
\end{proof}
\begin{Example}
    A broad class of examples can be obtained from the Whitehead Lemma, which states that every real semisimple Lie algebra has $H^2(\mathfrak{g},\mathbb{R})=\{e\}$. Thus, we can take any not simply connected Lie group whose Lie algebra is semisimple. Physically relevant examples within this class include $SO(n)$ for $n \geq 3$ and the Lorentz group.
\end{Example}
\begin{Remark}
    We would like to point out that the proper ortochronous Poincaré group also falls in the previous class of examples, but not due to the Whitehead Lemma, as the Poincaré algebra is not semisimple.
\end{Remark}
\begin{Theorem}
    Let $G$ be a connected Lie group, which satisfies
    \begin{equation}\label{heisenbergthm}
        \pi_1(G)=\{e\} \; \; \text{and} \; \; H^2(\mathfrak{g},\mathbb{R}) \neq \{e\}.
    \end{equation}
Then to every projective representation $\tau:G \to PU(\mathcal{H})$ there corresponds a unitary representation $\mathcal{R}:G_{enl} \to U(\mathcal{H})$, where $G_{enl}$ is a central extension of $G$ by $\mathbb{R}{^n}$ for some $n \in \mathbb{N}$.
\end{Theorem}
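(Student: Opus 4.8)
The plan is to build the enlarged group once and for all as the universal central extension prescribed by the Lie algebra cohomology, and then to factor every projective representation through it. Since $\pi_1(G)=\{e\}$, Theorem \ref{theorem1} gives $H^2_{es,gr}(G,U(1)) \cong H^2(\mathfrak{g},\mathbb{R})$; as $\mathfrak{g}$ is finite-dimensional this is a finite-dimensional real vector space. Put $n := \dim_{\mathbb{R}} H^2(\mathfrak{g},\mathbb{R})$ and fix cocycles $\omega_1,\dots,\omega_n \in \wedge^2(\mathfrak{g}^*)$ whose classes form a basis. I would first assemble them into the Lie algebra central extension $\widehat{\mathfrak{g}} := \mathfrak{g}\oplus\mathbb{R}^n$ with bracket $[(X,t),(X',t')] := ([X,X'],\,\omega_1(X,X'),\dots,\omega_n(X,X'))$; the $2$-cocycle identities are precisely what yields the Jacobi identity for this bracket, so $\widehat{\mathfrak{g}}$ is a central extension of $\mathfrak{g}$ by $\mathbb{R}^n$.

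The core step is to integrate $\widehat{\mathfrak{g}}$ to a Lie group central extension by $\mathbb{R}^n$, for which I would invoke the theorem of Neeb componentwise. For each $\omega_k$ let $\Omega_k$ be the associated left-invariant $2$-form; $\Omega_k$ is closed, since closedness of a left-invariant form is the differential-geometric restatement of the cocycle identity. If $X_r$ is any right-invariant vector field, its flow consists of left translations, which preserve $\Omega_k$, so $\mathcal{L}_{X_r}\Omega_k = 0$ and Cartan's formula gives $d\,i(X_r)\Omega_k = \mathcal{L}_{X_r}\Omega_k - i(X_r)\,d\Omega_k = 0$. Thus $i(X_r)\Omega_k$ is closed, and because $\pi_1(G)=\{e\}$ forces $H^1_{dR}(G)=0$ it is exact, so Neeb's criterion is met. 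Choosing the \emph{simply connected} model $\mathbb{R}$ (not $U(1)$) for each central fiber and forming the fibered product over $G$ produces a central extension $G_{enl}$ of $G$ by $\mathbb{R}^n$, with projection $p:G_{enl}\to G$ and $\operatorname{Lie}(G_{enl})=\widehat{\mathfrak{g}}$; this $G_{enl}$ is again simply connected.

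It remains to lift an arbitrary $\tau:G\to PU(\mathcal{H})$. I would pull back the central extension $\{e\}\to U(1)\to U(\mathcal{H})\to PU(\mathcal{H})\to\{e\}$ along $\tau$, obtaining $E_\tau := \{(g,U)\in G\times U(\mathcal{H}) : \tau(g)=\widehat{\pi}(U)\}$, a central extension of $G$ by $U(1)$ on which $\widetilde{\tau}(g,U):=U$ is a tautological unitary representation. By the Proposition identifying $\operatorname{Ext}_{Lie}(G,U(1))$ with $H^2_{es,gr}(G,U(1))$ together with Theorem \ref{theorem1}, the class of $E_\tau$ equals $\sum_k c_k[\omega_k]\in H^2(\mathfrak{g},\mathbb{R})$ for suitable $c_k\in\mathbb{R}$. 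Then $\phi(X,t):=(X,\sum_k c_k t_k)$ is a Lie algebra homomorphism $\widehat{\mathfrak{g}}\to\operatorname{Lie}(E_\tau)$ lying over $\mathrm{id}_{\mathfrak{g}}$; since $G_{enl}$ is simply connected it integrates to a Lie group homomorphism $\Phi:G_{enl}\to E_\tau$ with $p_{E_\tau}\circ\Phi=p$, where $p_{E_\tau}$ is the projection of $E_\tau$. Setting $\mathcal{R}:=\widetilde{\tau}\circ\Phi$ yields a unitary representation of $G_{enl}$, and since $\widehat{\pi}\circ\widetilde{\tau}=\tau\circ p_{E_\tau}$, one obtains $\widehat{\pi}\circ\mathcal{R}=\tau\circ p$, i.e.\ $\mathcal{R}$ lifts $\tau$.

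The main obstacle is the integration in the second step: producing a bona fide Lie group extension \emph{by the contractible group $\mathbb{R}^n$}. Checking Neeb's exactness hypothesis is exactly where simple connectivity of $G$ enters, and it is also what permits the central fiber to be $\mathbb{R}^n$ rather than $U(1)^n$, which is the precise reason case $(iii)$ delivers an $\mathbb{R}^n$-extension and not a $U(1)$-extension. A secondary point requiring care is verifying that $\Phi$ genuinely lies over $\mathrm{id}_G$, so that $\mathcal{R}$ lifts $\tau$ itself and not some twist of it; this follows from $\phi$ lying over $\mathrm{id}_{\mathfrak{g}}$ together with the uniqueness of integration out of the simply connected $G_{enl}$.
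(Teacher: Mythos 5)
The paper does not actually prove this theorem: it proves only case $(ii)$ (Theorem \ref{poincarethm}) and explicitly defers the remaining cases to the literature, so there is no in-paper argument to compare yours against. Judged on its own, your route is the natural one and uses exactly the toolkit the paper assembles: build $\widehat{\mathfrak{g}}=\mathfrak{g}\oplus\mathbb{R}^n$ from a basis of $H^2(\mathfrak{g},\mathbb{R})$, integrate it to an $\mathbb{R}^n$-extension $G_{enl}$ of $G$, and factor a given $\tau$ through $G_{enl}$ by pulling back $U(1)\to U(\mathcal{H})\to PU(\mathcal{H})$. Your verification of Neeb's exactness hypothesis is correct (the flow of a right-invariant field is by left translations, so $\mathcal{L}_{X_r}\Omega_k=0$, and $\pi_1(G)=\{e\}$ gives $H^1_{dR}(G)=0$); note that since $G$ is simply connected you could bypass Neeb entirely and integrate $\widehat{\mathfrak{g}}$ by Lie's third theorem, which is the shorter standard argument.

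Two points need more care. First, in the lifting step you apply the Proposition $\operatorname{Ext}_{Lie}(G,U(1))\cong H^2_{es,gr}(G,U(1))$ and Theorem \ref{theorem1} to the pullback $E_\tau$, and you integrate a Lie algebra map into $\operatorname{Lie}(E_\tau)$; all of this presupposes that $E_\tau$ is a finite-dimensional \emph{Lie} group central extension of $G$ by $U(1)$. Continuity of $\tau$ in the strong operator topology only makes $E_\tau$ a topological group; equipping it with a compatible smooth structure — equivalently, showing the associated $U(1)$-cocycle may be chosen smooth near the identity — is precisely the analytic content of Bargmann's theory (local sections of $U(\mathcal{H})\to PU(\mathcal{H})$, smoothing of continuous local exponents). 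You are importing this silently; it should be stated as an input. Second, $\operatorname{Lie}(E_\tau)$ is the extension determined by a cocycle $\omega_\tau$ that is only \emph{cohomologous} to $\sum_k c_k\omega_k$, so the map $\phi(X,t)=(X,\sum_k c_k t_k)$ respects brackets only after adding a coboundary correction, i.e.\ $\phi(X,t)=(X,\sum_k c_k t_k+\lambda(X))$ with $\delta\lambda=\omega_\tau-\sum_k c_k\omega_k$. This is a one-line fix, but as written $\phi$ need not be a homomorphism. With these two repairs the argument is sound, and the identity $\widehat{\pi}\circ\mathcal{R}=\tau\circ p$ correctly expresses the claimed correspondence.
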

\begin{Example}
    By an argument involving the Chevalley-Eilenberg resolution, one can show that the only abelian Lie groups that fall in this class are $\mathbb{R}^{n}$ for $n \geq 2$. These are quite important in quantum mechanics and closely related to the Heisenberg group.
\end{Example}
Finally, the most general case, when the group is neither simply connected nor has vanishing Lie algebra cohomology, is due to Cassinelli. For a proof, consult \cite{cassinelli2004}.
\begin{Theorem}[Cassinelli]
    Let $G$ be a connected Lie group, which satisfies
    \begin{equation}
        \pi_1(G) \neq \{e \} \; \; \text{and} \; \; H^2(\mathfrak{g},\mathbb{R})\neq \{e\}.
    \end{equation}
    Then to every projective representation $T:G \to PU(\mathcal{H})$ there corresponds a unitary representation $R:G_{enl} \to U(\mathcal{H})$, where $G_{enl}$ is a central extension of the universal cover $\widetilde{G}$ of $G$ by $\mathbb{R}^{n}$ for some $n \in \mathbb{N}$.
\end{Theorem}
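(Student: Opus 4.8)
The plan is to reduce this most general case to the two already settled, by peeling off the obstructions one at a time: first remove the topological obstruction by passing to the universal cover, exactly as in Theorem \ref{poincarethm}, and then remove the residual algebraic obstruction by a central extension, as in the third case. The genuinely new difficulty is that both operations must be carried out on the \emph{same} group and shown to be compatible, so that the output is a single central extension of $\widetilde{G}$ by $\mathbb{R}^n$ carrying the lifted representation.

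First I would dispose of the topology. Given $T:G \to PU(\mathcal{H})$ and the covering homomorphism $p:\widetilde{G} \to G$, set $\widetilde{T}:=T \circ p:\widetilde{G} \to PU(\mathcal{H})$, a projective unitary representation of the simply connected group $\widetilde{G}$. Since $\widetilde{G}$ and $G$ are locally isomorphic, $H^2(\widetilde{\mathfrak{g}},\mathbb{R})=H^2(\mathfrak{g},\mathbb{R}) \neq \{e\}$, so the mixed obstruction has been traded for a purely algebraic one on a simply connected group --- precisely the hypothesis of the third case.

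Next I would identify the obstruction cohomologically and build $G_{enl}$. As $\widetilde{G}$ is simply connected, Theorem \ref{theorem1} gives $H^2_{es,gr}(\widetilde{G},U(1)) \cong H^2(\widetilde{\mathfrak{g}},\mathbb{R})$; write $n:=\dim_{\mathbb{R}} H^2(\mathfrak{g},\mathbb{R})$. A local lift of $\widetilde{T}$ near the identity produces a smooth-near-$e$ multiplier whose class lives in this group. I would then form the universal central extension at the Lie-algebra level, $\widehat{\mathfrak{g}}=\widetilde{\mathfrak{g}} \oplus \mathbb{R}^n$ with bracket twisted by a basis of cocycles spanning $H^2(\widetilde{\mathfrak{g}},\mathbb{R})$, and integrate it: since $\widetilde{G}$ is simply connected, the sufficient condition recorded before Neeb's theorem guarantees that $\widehat{\mathfrak{g}}$ integrates to a Lie group central extension $G_{enl}$ of $\widetilde{G}$ by $\mathbb{R}^n$. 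The long exact homotopy sequence of the fibration $\mathbb{R}^n \hookrightarrow G_{enl} \to \widetilde{G}$, with contractible fibre and simply connected base, shows that $G_{enl}$ is itself simply connected.

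The hard part is the final step: showing that the multiplier of $\widetilde{T}$, pulled back along $G_{enl} \to \widetilde{G}$, becomes a coboundary, so that $\widetilde{T}$ lifts to a genuine unitary representation $R:G_{enl} \to U(\mathcal{H})$. The mechanism is the inflation--restriction (five-term exact) sequence for the central extension $\widehat{\mathfrak{g}}$: by construction the transgression $(\mathbb{R}^n)^{*} \to H^2(\widetilde{\mathfrak{g}},\mathbb{R})$ is surjective, forcing the inflation map $H^2(\widetilde{\mathfrak{g}},\mathbb{R}) \to H^2(\widehat{\mathfrak{g}},\mathbb{R})$ to vanish on every class, in particular on that of $\widetilde{T}$. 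Applying Theorem \ref{theorem1} now to the simply connected $G_{enl}$ shows the obstruction dies in $H^2_{es,gr}(G_{enl},U(1))$: the defining cocycle becomes a coboundary, the associated $U(1)$-extension splits, and one obtains the required homomorphism $R$ with $\widehat{\pi} \circ R = T \circ p$. The two points demanding real care are this vanishing of the inflated class and the preservation of strong-operator continuity through the local-lift and integration steps, the latter relying on $U(\mathcal{H})$ being a topological group and on $U(\mathcal{H}) \to PU(\mathcal{H})$ admitting continuous local sections.
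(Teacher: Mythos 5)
The paper does not actually prove this theorem: it states it and defers to \cite{cassinelli2004} for the proof, so there is no in-house argument to measure yours against. Judged on its own, your two-step reduction --- pull $T$ back along $p:\widetilde{G}\to G$ to trade the mixed obstruction for a purely algebraic one on a simply connected group, then kill that obstruction by passing to the universal central extension --- is the standard route and is essentially sound. The inflation--restriction mechanism is the right one: for the extension $0 \to \mathbb{R}^n \to \widehat{\mathfrak{g}} \to \widetilde{\mathfrak{g}} \to 0$ built from cocycles whose classes span $H^2(\widetilde{\mathfrak{g}},\mathbb{R})$, the transgression $(\mathbb{R}^n)^{*} \to H^2(\widetilde{\mathfrak{g}},\mathbb{R})$ is surjective by construction, so exactness of the five-term sequence forces the inflation map to vanish identically, and Theorem \ref{theorem1} applied to the simply connected $G_{enl}$ transports this to $H^2_{es,gr}(G_{enl},U(1))$. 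Taking the universal extension (rather than the single $\mathbb{R}$-extension attached to the particular multiplier of $T$) is what makes one group $G_{enl}$ serve all projective representations at once, which is what the theorem asserts.

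Three points carry real weight and should not be waved through. First, the argument hinges on the isomorphism of Theorem \ref{theorem1} being natural with respect to pullback along $G_{enl}\to\widetilde{G}$, so that the Lie-algebra-level vanishing actually implies vanishing of the group cocycle class; this is true (the isomorphism is induced by differentiating cocycles at the identity) but it is the load-bearing step and deserves an explicit statement. Second, the assertion that a continuous projective representation of $\widetilde{G}$ admits a local lift whose multiplier is smooth near $e$ is itself a nontrivial theorem (Bargmann's local analysis, resting on a continuous local section of $U(\mathcal{H})\to PU(\mathcal{H})$), not a formality. Third, triviality of the class only gives a lift on a neighbourhood of the identity; promoting it to a global continuous homomorphism $R$ uses the simple connectedness of $G_{enl}$, which you correctly establish via the homotopy sequence but do not then invoke for this purpose. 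Finally, a small slip: the intertwining relation should read $\widehat{\pi}\circ R = T\circ p\circ q$ with $q:G_{enl}\to\widetilde{G}$ the extension projection, since $R$ is defined on $G_{enl}$ while $T\circ p$ is defined on $\widetilde{G}$.
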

\begin{Remark}
     The central extension of $\widetilde{G}$ by $\mathbb{R}^{n}$ appearing in the previous theorem is often called the \textbf{universal central extension} of $G$. Universal central extensions for perfect groups have been extensively studied in \cite{raghunathan1994}.
\end{Remark}
\begin{Example}
    Any not simply connected abelian Lie group of dimension greater than two falls within this class. Tori $G=\mathbb{R}^{n}/ \mathbb{Z}^{n}, n \geq 2$ are such examples. However, a physically more relevant example is the Galilei group, which admits a non-trivial central extension by $\mathbb{R}$.
\end{Example}
\section{Enlargement of symmetry groups in physics: a practitioner's guide}\label{section3}
This section provides a concrete algorithm for finding the enlarged symmetry group of a quantum mechanical system, with a particular focus on the symmetry groups appearing in physics.

This algorithm is essentially based on the previously presented theorems. Its key ingredients are the topology of the Lie group in question, i.e. $\pi_1(G)$ and its second Lie algebra cohomology $H^2(\mathfrak{g},\mathbb{R})$. The procedure of finding the enlarged group $G_{enl}$ based on the data  $\left\{\pi_1(G),H^2(\mathfrak{g},\mathbb{R}) \right\}$ is depicted in Figure \ref{illustration} , where  $G^{\star}$ denotes an $\mathbb{R}^{n}$ extension of $G$, which is formally defined as a pair $\left(G^{\star}:=\mathbb{R}^{n} \times G, \cdot \right)$ with the operation
\begin{equation}
    \left(x,g_1 \right)\cdot (y, g_2):=(x+y+\left( \xi_1(g_1,g_2), \xi_2(g_1,g_2),\dots, \xi_{n}(g_1,g_2) \right), g_1 \circ g_2),
\end{equation}
where $\xi_1,\dots,\xi_n \in Z^2_{es}(G,\mathbb{R})$ are $\mathbb{R}$-valued cocycles, which are smooth around the identity. Similarly, $\left(\widetilde{G}\right)^{\star}$ denotes an $\mathbb{R}^{n}$ extension of the universal cover $\widetilde{G}$ of $G$. Formally, it is a pair $\left( \left(\widetilde{G} \right)^{\star}, \cdot \right)$ with the same operation as before.

\begin{figure}[htbp]
\centering
\includegraphics[width=1\linewidth]{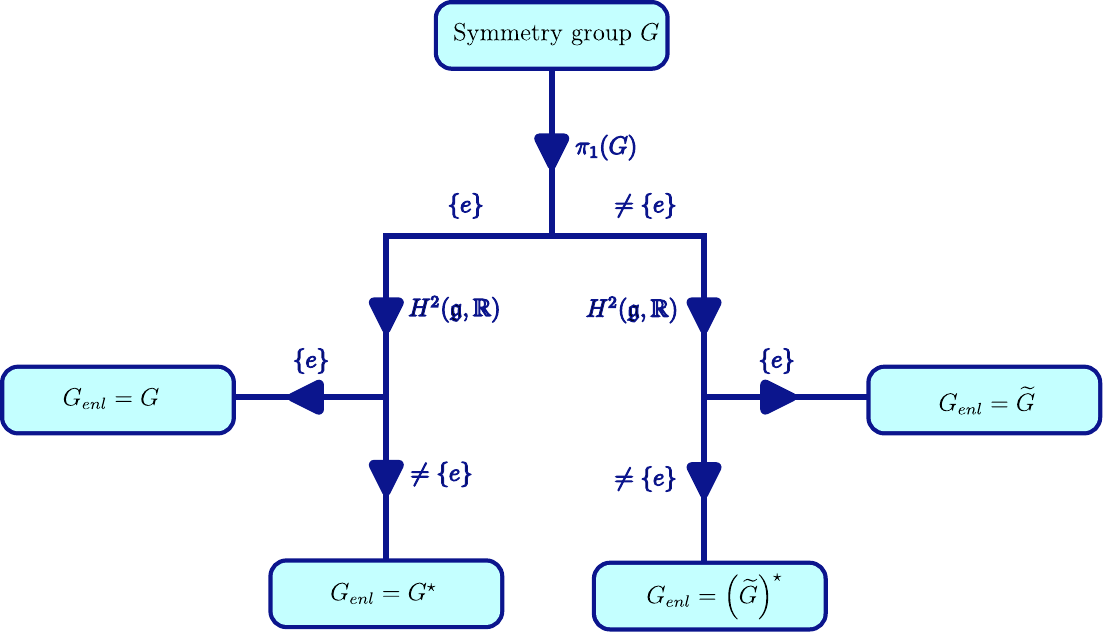}
\caption{Schematic illustration of the procedure that determines the enlarged symmetry group $G_{enl}$ given the data $(G,\pi_1(G),H^2(\mathfrak{g},\mathbb{R} )$.}
\label{illustration}
\end{figure}

This procedure can be readily applied to concrete physical situations at hand. For example, in the case of the rotation group $SO(3)$, we have:
\begin{equation}
\pi_1(SO(3))=\mathbb{Z}_2, \; \; H^2(\mathfrak{so}(3),\mathbb{R})=0.
\end{equation}
It immediately follows, by simply chasing the diagram from Fig. \ref{illustration}, that the projective representations of $SO(3)$ can be lifted to representations of $SU(2)$.

Although this result is well-known, viewing it from this perspective could be illuminating. Moreover, it sheds light on the appearance of spin in quantum mechanics simply from rotational symmetry without the need for any relativistic effect. Note, however, that a relativistic quantum theory is necessary for the correct description of spin-orbit coupling, for instance.

We describe three physically relevant groups for non-relativistic and relativistic quantum mechanics and their projective representations in the following. Essentially, these will be simple applications of the provided algorithm.
\subsection{The Galilei group}
The Galilei group plays a central role in non-relativistic quantum mechanics. In the spirit of Wigner, a non-relativistic particle could be defined as a projective unitary representation of the Galilei group $Gal$. Formally, this group is given by
\begin{equation}
    Gal=\mathbb{R}^4 \rtimes \left( \mathbb{R}^3 \rtimes SO(3) \right).
\end{equation}
It is readily seen that $\pi_{1}(Gal) \neq \{e \}$, since $SO(3)$ appears in the semidirect product. The universal cover of $Gal$ can be straightforwardly constructed as
\begin{equation}
    \widetilde{Gal}=\mathbb{R}^4 \rtimes \left( \mathbb{R}^3 \rtimes SU(2) \right).
\end{equation}
It can be shown that it has a non-trivial Lie algebra cohomology. More precisely we have
\begin{equation}
H^2\left(\mathfrak{Gal},\mathbb{R} \right)=\mathbb{R}.
\end{equation}

Essentially, the non-equivalent central extensions of $\mathfrak{Gal}$ are labeled by a real number, which is physically interpreted as the mass. In Weinberg's terminology, the mass is a central charge in the extended algebra. This interpretation also leads to a well-known superselection rule of particles with different masses, called the \textit{Bargmann superselection rule} \cite{wightman1995,giulini1996,hernandez2012,annigoni2013}. Hence, the Galilei group falls into the most general case, to which Cassinelli's theorem has to be applied. 

\subsection{The Poincaré group}
The (proper ortochronous) Poincaré group is the key object in Wigner's classification of relativistic particles. Although its irreducible representations are well known and present in almost every physics textbook, rigorous accounts and arguments are often missing. To this end, we would like to point out the class to which the (proper ortochronous) Poincaré group belongs in the scheme developed here. Recall that the (proper ortochronous) Poincaré group is a semidirect product of the spacetime translation group and the connected component of the proper ortochronous Lorentz group
\begin{equation}
    Poin =\mathbb{R}^4 \rtimes SO^{+}(1,3).
\end{equation}
Its universal cover is obtained from the double-sheeted $SL(2,\mathbb{C})$ covering as
\begin{equation}
    \widetilde{Poin}=\mathbb{R}^4 \rtimes SL(2,\mathbb{C}),
\end{equation}
from which we can conclude that it is not simply connected. However, its Lie algebra cohomology is vanishing even though the Lie algebra is not semisimple (for a proof, see \cite{vanderschaaf2017}). Hence, the Poincaré group falls in the case of Theorem \ref{poincarethm}. In Weinberg's terminology, all the central charges can be eliminated, as there is no non-trivial central extension, due to the absence of algebraic obstructions.

\subsection{The Heisenberg group}
Let us consider the Hilbert space $\mathcal{H}=L^2\left(\mathbb{R}^{2n} \right)$. The translations in position and momentum space are given by
\begin{equation}
    (P_a \psi)(x)=\psi(x-a), \; \; (M_a \psi)(x)=e^{iax}\psi(x), \; \; \text{respectively}.
\end{equation}
We can easily define a projective unitary representation of $\mathbb{R}^{2n}$ on $\mathcal{H}$ as
\begin{equation}
    \tau:\mathbb{R}^{2n} \to PU(\mathcal{H}), \; \; \tau(x,y)=P_x \circ M_y,
\end{equation}
since the two operators commute up to a phase. However, we cannot lift this projective representation to a unitary representation of $\mathbb{R}^{2n}$, since it has a non-vanishing Lie-algebra cohomology.  However, the projective unitary representation of $\mathbb{R}^{2n}$ is in one-to-one correspondence with the (unique) unitary representation of the Heisenberg group, which is a central extension of $\mathbb{R}^{2n}$ by $\mathbb{R}$, as $\mathbb{R}^{2n}$ falls in the case of theorem \ref{heisenbergthm}. Moreover, this representation is unique, as guaranteed by the Stone-von-Neumann theorem. Thus, the projective representations of $\mathbb{R}^{2n}$ lead to the unique representation of the position and momentum operators, respectively. They implement the commutation relation on (a dense subspace of) $\mathcal{H}$.

\section{Summary and outlook}\label{section4}
This article presented a detailed and mathematically precise review of the representation theory used in quantum mechanics. We noted the differences between ordinary group cohomology and Lie group cohomology, where the topological and smooth aspects have been discussed in depth. An algorithm was provided, which relates the enlarged group $G_{enl}$ of a quantum system to the $G$-symmetries described by a connected Lie group $G$. From our presentation, it becomes clear in which cases one has to consider the universal cover, a central extension, or a central extension of the universal cover, thus making the ideas presented in Weinberg's textbook \cite{weinberg2005} mathematically precise. All of this boils down to two properties of the Lie group $G$ in question: $\pi_1(G)$ and $H^2(\mathfrak{g},\mathbb{R})$, which are topological and algebraic, respectively. The algorithm was applied to three concrete cases of physical relevance. 

In this approach, although spin enters as a quantum label necessary to represent Poincaré symmetries in the Hilbert space, it can also be viewed as a consequence of quantum mechanics' mathematical formalism and consistency. It appears in any quantum system exhibiting rotational symmetry through the projective representations of $SO(3)$, or equivalently, as our theorems show, the unitary representations of $SU(2)$, respectively. We would also like to point out that projective representations are necessary, not just a matter of choice for the Galilei group. It can be shown that the Schrödinger equation is Galilei invariant iff the wave function transforms up to a phase that can not be eliminated.

To conclude this review, we would like to mention that recent progress has been made in the context of the representation theory of Lie supergroups. In particular, Mackey's imprimitivity theorem has been generalized to that setting \cite{carmeli2006}. It would be interesting to see if Bargmann's theorem, or more generally Cassinelli's theorem, can also be generalized, leading to a clear description of how to enlarge symmetry groups in the SUSY setting.

\section*{Acknowledgments}
L.Cs. would like to thank A. Craciun, W. Wawrow, A. Vats for helpful discussions and professors  Z.Néda and V.Chiș for sterning his interest in the topic. J.M.HdS. would like to thank CNPq (grant No. 307641/2022-8) for financial support. The work of L.Cs. was supported by a grant of the Ministry of Research, Innovation and Digitization, CNCS/CCCDI - UEFISCDI, project number PN-IV-P8-8.1-PRE-HE-ORG-2023-0118, within PNCDI IV and Collegium Talentum Hungary.

\appendix
\section{ Generalization of Bargmann's Theory}\label{appendixA}
Here, we shall depict the main steps of the formalism presented in \cite{voa}, leading to a generalization of Bargmann's theory of projective representations including time dependence\footnote{Spacetime dependence may be directly accomplished by straightforward adaptation of the formalism we are about to review.} on the phase. Before doing so, let us briefly recall the direct relation between time-dependent phases and gauge fields appearing even in non-relativistic quantum mechanics. 

Roughly speaking, consider a wave function $\psi$ describing a quantum system whose dynamics are dictated by the Schrödinger equation. Of course, any element of the set $\{e^{i\alpha\psi}\}$ for $\alpha$ constant (and real) keeps the dynamical equation invariant and the physical system description intact. Nevertheless, if $\alpha=\alpha(t,{\bf x})$, then the dynamical equation is not invariant anymore, and to restate invariance, compensating fields are in order. In the simple case at hand, the prescription means new time and spacial derivatives $\partial_t\mapsto \partial_t+i\varphi$ and ${\bf \nabla}\mapsto {\bf \nabla}-i{\bf A}$, respectively with (gauge) fields transforming by $\varphi'=\varphi-\partial_t\alpha(t,{\bf x})$ and ${\bf A}'={\bf A}+{\bf \nabla}\alpha(t,{\bf x})$. In an attempt to interpret and understand gauge freedom within the scope of projective representations, a formalism extending the theory was developed \cite{voa}. From now on, we will focus on the main steps of this generalization.   

The first aspect we shall emphasize is that probabilities can only be summed up for a countable set of events. This simple statement naturally leads us to appreciate the basic formulation for this generalization in the context of analytic Borel structures. Moreover, use is made of a specific fiber bundle whose typical fiber is given by a Hilbert space $\mathcal{H}$. 

Consider a fiber bundle with a base manifold given by $\mathbb{R}$. For every point $t\in\mathbb{R}$, we associate a Hilbert space $\mathcal{H}_t$ performing the fiber. The Hilbert bundle is defined as the set $\{(t,\psi)|t\in\mathbb{R},\psi\in\mathcal{H}_t\}\:=\mathbb{M}$. A Hilbert bundle cross-section is defined by
\begin{eqnarray}
\psi&:&\left.\mathbb{R}\rightarrow \mathcal{H}_t\right.\nonumber\\
&& t\mapsto \psi_t, \forall t\in\mathbb{R}.
\end{eqnarray}
Therefore, for any cross-section $\psi$ and $(t_0,\phi_{0})\in\mathbb{M}$ it is possible to compose the inner product $(\phi_{0},\psi_{t_{0}})$ in such a way that any cross-section define complex-valued function in $\mathbb{M}$. Recall that a Borel set may be generically defined as one formed through the countable union, countable intersection, and complement of open sets. Besides, a given function, say $f:X\rightarrow Y$, is said to be a Borel function if $f^{-1}(O)$ is a Borel set for every open set $O\subset Y$. 

\begin{Definition}
	A \textbf{Borel bundle} is a Hilbert bundle $\mathbb{M}$ endowed with a Borel analytic structure\footnote{A continuous image of a Borel set in a Banach space is called an analytic Borel structure. In particular, it is Lebesgue measureable.} in $\mathbb{M}$ such that
	\begin{itemize}
		\item Let $\pi(t,\psi)=t$ be the fibre bundle projection. Then, $\mathbb{E}\subseteq \mathbb{R}$ is a Borel set iff $\pi^{-1}(\mathbb{E})$ is a Borel set on $\mathbb{M}$.
		\item There exist countable sections $\{\psi^i\}$ ($i=1,2,\cdots$) such that: 
		\begin{itemize}
		\item their corresponding complex-valued functions on $\mathbb{M}$ are Borel functions;
		\item no two different points $(t_k,\phi_k)\in \mathbb{M}$ ($k=1,2$) give the same values for every $\psi^i$, except if $\phi_1=\phi_2=0$;
		\item the mapping $t\rightarrow (\psi^i_t,\psi^j_t)$ is a Borel function for all $i$ for all $j$. 
		\end{itemize}	
	\end{itemize}	
\end{Definition}	
Finally, a section is called a Borel cross-section if the function defined on $\mathbb{M}$ by the section is a Borel function. A couple of definitions suitable for our approach are still necessary.  

\begin{Definition}
	Let $\mu$ be a Lebesgue measure on $\mathbb{R}$. A section $t\rightarrow \psi_t$ is \textbf{square summable with respect to $\mu$} if 
	\begin{equation}
	\int_\mathbb{R}(\psi_t,\psi_t)d\mu(t)<\infty.
	\end{equation} 
\end{Definition}

From these definitions, it is possible to say that the space $\mathcal{L}^2(\mathbb{R},\mu,\mathcal{H})$ containing all classes of square summable sections forms a separable Hilbert space with inner product given by 
\begin{equation}
(\phi,\psi)=\int_\mathbb{R}(\phi_t,\psi_t)d\mu(t).
\end{equation} Also, using this construction, it is conceivable to assign to each time a self-adjoint operator $O_t:\mathcal{H}_t\rightarrow \mathcal{H}_t$ performing, at the formalism level, the measuring action.



A relevant isomorphism between Hilbert bundles, $\mathbb{M}\simeq\mathbb{M}'$, is defined as follows: $U:\mathbb{M}\rightarrow\mathbb{M}'$ is a Borel isomorphism such that for every $t\in\mathbb{R}$
\begin{eqnarray}
U|_{t\times \mathcal{H}_t} \Rightarrow \exists \hspace{.2cm} t'\times \mathcal{H'}_{t'}\in \mathbb{M}',
\end{eqnarray} with unitarity. Of course, for $\mathbb{M}'=\mathbb{M}$, we have an automorphism. 
     
Consider, as usual, a connected Lie group $G$ and denote by $\mathcal{A}$ the set of automorphisms of $\mathbb{M}$. Consider also the action $G\times \mathbb{R}$ of $G$ on $\mathbb{R}$ given by $t\mapsto r^{-1}t$ for all $r\in G$ and $t\in\mathbb{R}$. The mapping $r\rightarrow U_r$ (from $G$ to $\mathcal{A}$) is a representation of $G$ if $U_rU_s=e^{i\xi(r,s,t)}U_{rs}$. Besides, it is considered that: {\it i)} the action of $G$ on $\mathbb{M}$ is transitive and smooth, {\it ii)} $\xi$ is differentiable in $t\in\mathbb{R}$. Denote by $\mathcal{F}$ the set of all real and differentiable functions on $\mathbb{R}$. Let us revisit the ray definition to the relevant quantities for our purposes. 
\begin{Definition}
	\textbf{The ray} of a given cross-section $\psi$ is defined by $\Psi=\{e^{i\zeta(t)}\psi(t), \zeta\in\mathcal{F}\}$ and the \textbf{operator ray} corresponding to a given automorphism $U$ is $\mathbb{U}=\{\tau U| t\mapsto \tau(t)\in \mathcal{F}, |\tau|=1\}$. 
\end{Definition} As usual in the standard formulation, the continuity concept allows for the current physical interpretation that probability transitions vary continuously with variation of elements in $G$. The idea of continuity can be expressed as follows: for any $r\in G$, any ray $\Psi$ and any positive $\varepsilon$, there exists a neighborhood $\mathcal{V}(\ni r) \subset G$ such that 
\begin{eqnarray}
d_t({\bf U_s}\Psi,{\bf U_r}\Psi)<\varepsilon,
\end{eqnarray} if $s\in\mathcal{V}$ and $t\in\mathbb{R}$, where $d_t(\Psi_a,\Psi_b)=\inf||\psi_a-\psi_b||_t$ and $\psi\in\Psi$. With this concept of continuity, it is possible to show that it is always feasible to find continuous representatives\footnote{See Ref. \cite{silva2021}, Theorem 1, for detailed and commented proof. The unique necessary adaptation of the proof presented in \cite{silva2021} to the case at hand is the consideration that $t$ must belong to a compact subset $\mathcal{S}$ of $\mathbb{R}$.} $U_r\in{\bf U}_r$. The associative law together with $U_rU_s=e^{i\xi(r,s,t)}U_{rs}$ may be used to show that continuity of phase factors is inherited from the continuous representatives\footnote{See also \cite{silva2021}, Lemma 1, replacing $\omega$ by $\exp{[i(r,s,t)]}$.}. Additionally, notice that given a continuous representative $U_r$, another one may be reached by $U'_r=e^{i\zeta(r,t)}U_r$ for $\zeta$ real, continuous in $r$. In fact, $||U'_s\psi-U'_r\psi||<\varepsilon$ implies $2-2\text{Re}(e^{i\kappa}U_s\psi,U_r\psi)<\varepsilon^2$, where $\kappa\equiv \zeta(s,t)-\zeta(r,t)$ and we assume $t\in \mathcal{S}\subset\mathbb{R}$. Calling $a_p$ ($b_p$) the real (imaginary) part of $U_p\psi$, the previous relation reads $(1-a_s\cos(\kappa)-b_s\sin(\kappa)-a_r)<\varepsilon^2/2$. It is readily verified that iff the limit $\kappa\rightarrow 0$ is well defined (for $r,s\in\mathcal{V}\subset G$), the new representatives are also continuous.  As it can be verified, the relation $\xi'(r,s,t)=\xi(r,s,t)+\Delta[\zeta]$ is reflexive, transitive, and symmetric, leading to a true equivalence relation. Thus, two exponents $\xi$ and $\xi'$ are equivalent if $\xi'(r,s,t)=\xi(r,s,t)+\Delta[\zeta]$. 

The phase freedom in the ray representation allows for a relevant generalization, encompassing local groups. In fact, take $\theta(t)\in \mathcal{F}$ and select representatives ($e^{i\theta(t)}U_r$ and $e^{i\theta'(t)}U_s$) such that 
\begin{equation}
(e^{i\theta(t)}U_r)(e^{i\theta'(t)}U_s)=e^{i(\theta(t)+\theta'(r^{-1}t)+\xi(r,s,t))}U_{rs},\label{lg}
\end{equation} from which a group structure may be read for elements belonging to $\{\theta(t),r\}$ with product given by $\{\theta(t),r\}\{\theta'(t),s\}=\{\theta(t)+\theta'(r^{-1}t)+\xi(r,s,t),rs\}$, for $r,s\in G$. As usual, call this group $H$. If $e$ is the identity in $G$, then $\bar{e}=\{0,e\}$ is the identity in $H$ and elements given by $\{\theta(t),e\}$ form an Abelian subgroup $T\subset H$. The novelty appearing in this generalization of Bargmann's theory concerning these $H$ (local) groups comes from the acting of $G$ in $\mathbb{R}$: due to this action ($t\mapsto r^{-1}t$) $H$ is to be considered equal to $T\rtimes G$ so that $H$ is a semicentral extension of $G$. However, it is still true that locally $G\simeq H/T$ and if two exponents are equivalent, their respective associated semicentral extension groups are homomorphic \cite{barg}. Finally, the exponents are proved to be continuous, but it is also possible to show that they are additionally differentiable in $r$ and $s$. The standard method to prove this important property is the so-called Iwasawa construction \cite{iwa,barg,voa}. See also \cite{ald} (Theorem 1 and Lemma 1) for a detailed account of exponents differentiability via Iwasawa construction. 

Generally speaking, the above construction already generalizes phase dependence of projective representations to encompass time explicitly. The rest of this appendix deals with framing such a generalization, {\it mutatis mutandis}, into the main text algorithm regarding the enlargement of symmetry. To do so, we shall sketch the main aspects of the ingenious strategy developed in Ref. \cite{voa} to construct the algebra related to the $H$ groups.

Firstly, we call attention to the fact that there is a very useful particularization in the analysis that shall be used from now on. A given local exponent $\xi(r,s,t)$ of $G$ is said to be canonical if it is differentiable in all variables and equals zero if $r$ and $s$ are elements of a one-parameter subgroup. It is possible to show \cite{barg} that locally every exponent is equivalent to a canonical one. 

It is considered an embed of $H$ in a Lie group (with manifold structure performed by a Banach space) so that a standard relation between group and algebra is warranted with convergence of the Baker-Campbell-Hausdorff series. Hence, take $\tilde{H}$ as the closure of $H$. The semisimple extension of $T$ with $G$ is inherited to its closure $\tilde{N}$, that is $\tilde{H}=\tilde{T}\rtimes G$ and, therefore, $\tilde{H}\ni (n,g)$ with $n\in \tilde{N}$ and $g\in G$. Observe that $(n,g)(n',g')=(ngn'g^{-1},gg')$. Thus $gn'g^{-1}\in\tilde{N}$ defines an automorphism, say $\mathcal{A}_g:=gn'g^{-1}\in \text{Aut}(\tilde{N})$. Elements of $H\subset\tilde{H}$ are given by $\{f(t),g\}$ and to give the right account of group action upon $t\in\mathbb{R}$ (as discussed), the restriction of $\mathcal{A}$ to $H$ shall given by $(\mathcal{A}_g f)(t):=f(g^{-1}t)$.  

Consider any two fixed vectors $A$ and $B$ of the Lie algebra, and $\tau A$ and $\tau B$ (for $\tau \in \mathbb{R}^*$), two local one-parameter subgroups. Define the quantity 
\begin{equation}
q(\tau A,\tau B):=\lim_{\tau\rightarrow 0}\tau^{-2}(\tau A)(\tau B)(\tau A)^{-1}(\tau B)^{-1}. 
\end{equation} Using the standard exponential mapping between Lie algebra and Lie group and the Baker-Campbell-Hausdorff series, one gets 
\begin{eqnarray}
e^{\pm\tau A}e^{\pm\tau B}=\exp{\Big(\pm\tau A\pm\tau B+\frac{\tau^2}{2}[A,B]\pm O(\tau^3)\Big)},
\end{eqnarray} up to second order in $\tau$. Therefore, for $\tau$ sufficiently small, we are left with $\exp{(\tau^2[A,B])}$, from which the bilinear, antisymmetric commutator can be fairly identified, i.e., $q=[A,B]$. Now, the computation of $[\bar{a},\bar{b}]$ 
\begin{eqnarray}
[\bar{a},\bar{b}]=[\{f,a\},\{y,b\}]=\lim_{\tau\rightarrow 0}\tau^{-2}(\tau\bar{a}\tau\bar{b})(\tau^{-1}\bar{a}^{-1}\tau^{-1}\bar{b}^{-1}),
\end{eqnarray} leads to 
\begin{eqnarray}
    [\bar{a},\bar{b}]=\{\mathfrak{F}(f,y)+\Xi(a,b,t),[a.b]\},
\end{eqnarray} where 
\begin{eqnarray}
\Xi(a,b,t)&=&\left.\lim_{\tau\rightarrow 0}\tau^{-2}[\xi((\tau a)(\tau b),(\tau a)^{-1}(\tau b)^{-1},t)+\xi(\tau a,\tau b,t)\right.\nonumber\\&+&\left.\xi((\tau a)^{-1},(\tau b)^{-1},(\tau b)^{-1}(\tau a)^{-1}t],\right.\label{qql}
\end{eqnarray}  are called infinitesimal exponents and $\mathfrak{F}(f,y)$ can be bypassed for this general exposition. It is worth stressing that the Jacobi identity is warranted for the bilinear form, and then, the algebra is recovered fully. As a last step, we call attention to the following particularization concerning canonical exponents: if two canonical exponents are equivalent, then $\xi'=\xi+\delta[\Lambda]$, where $\Lambda$ is a linear real form defined in the algebra \cite{barg}. By inserting this particularized equivalence into (\ref{qql}), we arrive at a similar equivalence relation for the infinitesimal exponents, that is, $\Xi'=\Xi+\Delta[\Lambda]$. From now on, there is a {\it pari passu} reasoning between this formulation and Bargmann's theory. With effect, as in the standard case, to every local exponent, there corresponds an infinitesimal exponent\footnote{It is more or less obvious from the outlined construction (see \cite{voa}, Theorem 3).}. Moreover, two canonical exponents are equivalents iff their infinitesimal exponents counterparts are also equivalents. This similarity assigns the generalization at hand to the algebraic branch of the enlarging symmetry algorithm highlighted in the main text. Besides, from the topological aspect, we call attention once again to the fact that, just as in the usual case, a given exponent is equivalent to a canonical one only locally, and the obstructions to the global validity of the analysis rest again upon the group manifold simply connectivity.

\printbibliography

\end{document}